\font\msbm=msbm10
\numberwithin{equation}{section}
\theoremstyle{plain}
\newtheorem{theorem}{Theorem}[section]
\newtheorem{lemma}[theorem]{Lemma}
\newtheorem{corollary}[theorem]{Corollary}
\def\mathbb#1{\hbox{\msbm{#1}}}
\newcommand{\be}{\boldsymbol{e}}
\newcommand{\bu}{\boldsymbol{u}}
\newcommand{\bx}{\boldsymbol{x}}
\newcommand{\bone}{\boldsymbol{1}}
\newcommand{\BPhi}{\boldsymbol{\Phi}}
\newcommand{\BA}{\boldsymbol{A}}
\newcommand{\BB}{\boldsymbol{B}}
\newcommand{\BG}{\boldsymbol{G}}
\newcommand{\BJ}{\boldsymbol{J}}
\newcommand{\BL}{\boldsymbol{L}}
\newcommand{\BO}{\boldsymbol{O}}
\newcommand{\BQ}{\boldsymbol{Q}}
\newcommand{\BR}{\boldsymbol{R}}
\newcommand{\BS}{\boldsymbol{S}}
\newcommand{\BU}{\boldsymbol{U}}
\newcommand{\BV}{\boldsymbol{V}}
\newcommand{\BW}{\boldsymbol{W}}
\newcommand{\BX}{\boldsymbol{X}}
\newcommand{\BY}{\boldsymbol{Y}}
\newcommand{\BZ}{\boldsymbol{Z}}
\newcommand{\BDelta}{\boldsymbol{\Delta}}
\newcommand{\BPi}{\boldsymbol{\Pi}}
\newcommand\keywords[1]{\textbf{Keywords}: #1}
\newcommand\MSC[1]{\textbf{MSC numbers}: #1}
\newcommand{\BLambda}{\boldsymbol{\Lambda}}
\newcommand{\PP}{\mathcal{P}}
\newcommand{\pa}{\partial}
\newcommand{\I}{\boldsymbol{I}}
\newcommand{\RR}{\mathbb{R}}
\newcommand{\lag}{\langle}
\newcommand{\rag}{\rangle}
\newcommand{\eps}{\epsilon}
\newcommand*\diff{\mathop{}\!\mathrm{d}}
\DeclareMathOperator{\Tr}{Tr}
\DeclareMathOperator{\Od}{O}
\DeclareMathOperator{\E}{\mathbb{E}}
\DeclareMathOperator{\diag}{diag}
\DeclareMathOperator{\blkdiag}{blkdiag}
\DeclareMathOperator{\BDG}{BDG}
\DeclareMathOperator{\St}{St}
\renewcommand{\Pr}{\mathbb{P}}
\definecolor{xl}{RGB}{200,50,120}
\begin{document}
\title{\bf Local Geometry Determines Global Landscape in Low-rank Factorization for Synchronization}

%\author{Shuyang Ling\thanks{Shanghai Frontiers Science Center of Artificial Intelligence and Deep Learning, New York University Shanghai. Address: 567 Yangsi Road, Pudong New District, Shanghai, China, 200124. S.L. is (partially) financially supported by the National Key R\&D Program of China, Project Number 2021YFA1002800, National Natural Science Foundation of China (NSFC) No.12001372, Shanghai Municipal Education Commission (SMEC) via Grant 0920000112, and NYU Shanghai Boost Fund.} }

\author{Shuyang Ling\thanks{Shanghai Frontiers Science Center of Artificial Intelligence and Deep Learning, New York University Shanghai. Address: 567 Yangsi Road, Pudong New District, Shanghai, China, 200124. S.L. is (partially) financially supported by the National Key R\&D Program of China, Project Number 2021YFA1002800, Shanghai STCSM Rising Star Program No. 24QA2706200, STCSM General Program No. 24ZR1455300, National Natural Science Foundation of China No.12001372, Shanghai SMEC No. 0920000112, SMEC AI Initiative Program, and NYU Shanghai Boost Fund.}}

\maketitle

\begin{abstract}
The orthogonal group synchronization problem, which focuses on recovering orthogonal group elements from their corrupted pairwise measurements, encompasses examples such as high-dimensional Kuramoto model on general signed networks, $\mathbb{Z}_2$-synchronization, community detection under stochastic block models, and orthogonal Procrustes problem.
The semidefinite relaxation (SDR) has proven its power in solving this problem; however, its expensive computational costs impede its widespread practical applications. We consider the Burer-Monteiro factorization approach to the orthogonal group synchronization, an effective and scalable low-rank factorization to solve large scale SDPs. Despite the significant empirical successes of this factorization approach, it is still a challenging task to understand when the nonconvex optimization landscape is benign, i.e., the optimization landscape possesses only one local minimizer, which is also global.  In this work, we demonstrate that if the rank of the factorization exceeds twice the condition number of the ``Laplacian" (certificate matrix) at the global minimizer, the optimization landscape is absent of spurious local minima. Our main theorem is purely algebraic and versatile, and it seamlessly applies to all the aforementioned examples: the nonconvex landscape remains benign under almost identical condition that enables the success of the SDR. 
Additionally, we illustrate that the Burer-Monteiro factorization is robust to ``monotone adversaries", mirroring the resilience of the SDR. In other words,  introducing ``favorable" adversaries into the data will not result in the emergence of new spurious local minimizers. 
\end{abstract}

\keywords{Semidefinite relaxation, nonconvex optimization, optimization landscape, orthogonal group synchronization, Burer-Monteiro factorization, Kuramoto model}
\vskip0.4cm
\MSC{90C26, 90C30, 90C35, 90C46, 34C15}

\section{Introduction}

Consider the orthogonal group synchronization:
\[
\BA_{ij} = \BO_i\BO_j^{\top} + \text{noise},
\]
where $\BO_i$ belongs to $\Od(d)$, defined as:
\[
\Od(d) = \{ \BG\in\RR^{d\times d}: \BG\BG^{\top} = \BG^{\top}\BG=\I_d \}
\]
This set includes all $d\times d$ orthogonal matrices.
It is more appropriate to treat $\BA_{ij}$ as a noisy version of $\BO_i\BO_j^{\top}$ subject to particular corruption models.
The goal of orthogonal group synchronization is to estimate the orthogonal matrices $\{\BO_i\}_{i=1}^n$ from their incomplete and corrupted pairwise measurements $\BA_{ij}$. 
The $\Od(d)$ synchronization has gained significant attention, encapsulating numerous problems emerging from data science and machine learning such as community detection~\cite{ABH15}, computer vision~\cite{HG13}, and cryo-EM~\cite{S18}. It also serves as a crucial subtask in special Euclidean group synchronization arising from robotics~\cite{RCBL19}. Moreover, it shares close ties with the synchronization phenomenon observed in the studies on collective behaviors of coupled oscillators~\cite{ABK22,K75,MTG17,MTG20,LXB19,MB23}. 

\paragraph{Least squares estimation and convex relaxation}
One common strategy is to identify the least-squares estimator:
\begin{equation}\label{def:ls}
\min_{\BR_i\in\Od(d) }~\sum_{i<j} \|\BR_i\BR_j^{\top} - \BA_{ij}\|_F^2 \Longleftrightarrow \min_{\BR_i\in\Od(d) }~-\lag \BA,\BR\BR^{\top}\rag
\end{equation}
where $\BR^{\top} = [\BR_1^{\top},\cdots,\BR_n^{\top}]\in\Od(d)^{\otimes n}$ {is a 
concatenation of $n$ $d\times d$ orthogonal matrices} and the $(i,j)$-block of $\BA$ is $\BA_{ij}.$ {For $d=1$,~\eqref{def:ls} is essentially the graph max-cut problem and it belongs to the Karp's 21 NP-complete problems in~\cite{K72}. A range of optimization methods have been developed to either approximate or solve it exactly including SDR relaxation~\cite{GW95} and the sum-of-squares relaxation in~\cite{KB21,MRX20}.}
Following the idea of Goemans-Williamson relaxation of max-cut~\cite{GW95}, we have an semidefinite relaxation (SDR) of~\eqref{def:ls}. This relaxes $\BR\BR^{\top}\in\RR^{nd\times nd}$ to a set that is positive semidefinite with its diagonal blocks equal to $\I_d$, 
\begin{equation}\label{def:sdr}
\min_{\BX\in \RR^{nd\times nd}}~-\lag \BA,\BX\rag~~~\text{s.t.}~~\BX\succeq 0,~\BX_{ii} = \I_d,
\end{equation}
{where $\BX\succeq 0$ means $\BX$ is positive semidefinite and $\BX_{ii} = \I_d$ assumes the $i$-th $d\times d$ diagonal block is $\I_d$.}
This convex relaxation can be solved by algorithms such as interior point method~\cite{BN01,NN94}.
Its performance has been studied for the synchronization of $\mathbb{Z}_2$-group~\cite{ABBS14}, circle group on torus~\cite{BBS17,S11}, orthogonal group $\Od(d)$~\cite{B15,LYS23,MMMO17}, and special Euclidean group~\cite{RCBL19}. The performance of SDP relaxation is remarkable: under several statistical models, it has been shown that the tightness holds~{\cite{ABBS14,BBS17,B18,L23,L22,L23b}}, i.e.,~\eqref{def:sdr} retrieves the global minimizer to~\eqref{def:ls}. This means that the global minimizer to~\eqref{def:sdr} is exactly rank-$d$ and thus it also corresponds to the global optimal solution of~\eqref{def:ls}. 
Using the Karush-Kuhn-Tucker (KKT) condition of~\eqref{def:sdr}, it is straightforward to argue the tightness holds if there exists $\widehat{\BO}\in\Od(d)^{\otimes n}$ that is rank-$d$ and satisfies:
\begin{equation}\label{def:cert}
\BL = \BDG(\BA\widehat{\BO}\widehat{\BO}^{\top}) - \BA\succeq 0,~~\BL\widehat{\BO}=0
\end{equation}
where 
\begin{equation}\label{def:bdg}
\BDG(\BX) := \frac{1}{2}\blkdiag\left(\BX_{11}+\BX_{11}^{\top},\cdots,\BX_{nn} +\BX_{nn}^{\top}\right). 
\end{equation}
and $\BX_{ii}$ is the $i$-th $d\times d$ diagonal block of $\BX\in\RR^{nd\times nd}.$ {The interested readers may refer to~\cite{B15,BVB20,L23} for the detailed derivation of~\eqref{def:cert}.}
We will often refer to $\BL$ as the certificate matrix or Laplacian matrix since~\eqref{def:cert} is similar to the graph Laplacian.

One benchmark statistical model is the $\Od(d)$-synchronization under additive Gaussian noise~\cite{BBS17,L23,MMMO17,S11}:
\begin{equation}\label{eq:odsync}
\BA_{ij} = \BO_i\BO_j^{\top} +\sigma\BW_{ij} \Longleftrightarrow \BA = \BO\BO^{\top} +\sigma\BW
\end{equation}
where $\BW$ represents an $nd\times nd$ symmetric Gaussian random matrix. 
For $d=1$ which corresponds to $\mathbb{Z}_2$-synchronization, a series of works have investigated the tightness of SDR~\cite{ABBS14,ABH15,B18} and showed the tightness holds if $\sigma\lesssim\sqrt{n/\log n}$. 
However, for $d\geq 2$, it is non-trivial to show the existence of an element $\widehat{\BO}\in\Od(d)^{\otimes n}$ that satisfies~\eqref{def:cert}, mainly because the ground truth $\BO\BO^{\top}$ is not usually the global minimizer to~\eqref{def:sdr}. The first tightness result of the SDR for $d\geq 2$ is derived in~\cite{BBS17} which considered the angular synchronization (it is equivalent to SO($2$)-synchronization) and proved the tightness holds if $\sigma \lesssim n^{1/4}$. 
Starting from~\cite{BBS17}, the condition on $\sigma$ to guarantee the tightness has been improved over the past few years~\cite{B16,LYS17,ZB18}. Using the leave-one-out technique,~\cite{ZB18} provided a near-optimal guarantee on tightness under $\sigma\lesssim \sqrt{n/\log n}$, and later~\cite{L22} extended the result to $\Od(d)$-synchronization provided that $\sigma\lesssim \sqrt{n}/d$ holds (modulo constant factors and logarithmic terms).

\paragraph{Low-rank optimization via Burer-Monteiro factorization}
However, given the high computational costs of solving~\eqref{def:sdr}, one often wants to leverage the low-rank structure of the solution to design more efficient algorithms. A common approach is the Burer-Monteiro factorization~\cite{BM03,BM05}, which ``interpolates" between~\eqref{def:ls} and the SDR by relaxing an orthogonal matrix to an element on Stiefel manifold $\St(p,d)$ defined as
\[
\St(p,d)  : = \{\BS_i\in\RR^{d\times p} : \BS_i\BS_i^{\top} = \I_d\},~~~p\geq d.
\]
Then the Burer-Monteiro factorization of~\eqref{def:sdr} becomes
\begin{equation}\label{def:bm}
\min_{\BS_i\in\St(p,d)}~f(\BS) :=-\sum_{i<j} \lag \BA_{ij}, \BS_i\BS_j^{\top}\rag = -\frac{1}{2}\lag \BA, \BS\BS^{\top}\rag
\end{equation}
where $\BS^{\top} = [\BS_1^{\top},\cdots,\BS_n^{\top}]\in\RR^{p\times nd}.$
In particular, if $p = d$, $\St(p,d)$ is exactly equal to orthogonal group $\Od(d)$, and if $p$ is sufficiently large,~\eqref{def:bm} is equivalent to the convex relaxation~\eqref{def:sdr}. 
%In particular, for $d=1$, $\St(p,d)$ equals the unit sphere $\mathbb{S}^{p-1}$ in $\RR^p.$ 
For the algorithms that solve the problem~\eqref{def:bm}, one may find the references such as~\cite{AMS08,B23,WY13} useful. 

Note that the objective function~\eqref{def:bm} is nonconvex and therefore, first-order methods might get stuck at those spurious local minimizers that are not global. Those spurious local minimizers are more likely to appear for smaller $p$. 
However, in the examples such as $\mathbb{Z}_2$- and $\Od(d)$-synchronization, the algorithms converge to the global minimizer   even with~\emph{random initializations} and very small $p$. This leads to an important theoretical question on the optimization landscape of~\eqref{def:bm}: 
\begin{equation*}
\text{\em Is the optimization landscape benign?}
\end{equation*}
Here we say the optimization landscape is {\em benign} if there is only one local minimizer that is also global in~\eqref{def:bm}. In other words, the objective does not have any spurious local minimizers that are non-global. The characterization of the landscape depends on $p$, $d$, as well as the input data matrix $\BA$. We will now briefly review the existing results on the optimization landscape, and discuss its connection to high-dimensional Kuramoto model.

\paragraph{Optimization landscape}
The optimization landscape analyses of nonconvex functions have been {carried} out in many signal processing and machine learning problems such as~\cite{GLM16,SQW16,SQW18}. For the Burer-Monteiro factorization of the form~\eqref{def:bm}, the work~\cite{BVB20} by Boumal, Voroninski, and Bandeira has proven that a benign landscape holds if $p(p+1)/2 > m$ where $m$ is the number of linear equality constraints. In addition, this bound has been shown to be optimal in general~\cite{P98,WW20}. On the other hand, the required degree of freedom $p$ is notably small if the input data are drawn from certain generative models. For instance, for $\mathbb{Z}_2$-synchronization, i.e., $d=1$ in~\eqref{eq:odsync}, it has been shown in~\cite{BBV16} that as long as $\sigma \lesssim O(n^{1/6})$, then $p=2$ is sufficient to guarantee a benign landscape. Similar results also hold for the community detection under the stochastic block models. For $d\geq 2$,~\cite{L23} provided a general sufficient condition that guarantees the absence of spurious local minimizers in the optimization landscape for $p\geq 2d+1$. In other words, for the practical problem with $p$ moderately larger than $d$, solving~\eqref{def:bm} by gradient descent on the manifold leads to a globally optimal solution, without getting stuck at those local minima~\cite{BBV16,L23}. 
This observation inspires a conjecture that the optimization landscape is benign:~\eqref{def:bm} has only one local minimizer which is also global, if the signal-to-ratio (SNR) is sufficiently large. 

Despite the tightness of the SDR holds under near-optimal condition on the SNR~\cite{B18,L22,ZB18}, it remains an open problem to understand whether the optimization landscape is benign for the SNR in a near-optimal regime. Consider $\mathbb{Z}_2$ as an example:{~\cite{B18} has shown the SDR recovers the planted ground truth if $\sigma \leq\sqrt{\frac{n}{(2+\eps)\log n}}$ for some $\eps>0$ which is information-theoretically optimal~\cite{B18,PK24}}. This threshold is significantly larger than the current theoretical bound on $\sigma\lesssim n^{1/4}$ in~\cite{L23} to guarantee a benign landscape.
This leads one key question of our interests in this work: 
\begin{align*}
\text{\em Is the optimization landscape benign under nearly identical conditions as the SDR succeeds?} 
\end{align*}
We will provide a positive answer: it applies to $\mathbb{Z}_2$-synchronization, community detection under stochastic block model, $ \Od(d)$-synchronization, and generalized orthogonal Procrustes problem.

\paragraph{High-dimensional Kuramoto model}
The problem~\eqref{def:bm} has attracted much attention in the recent years due to its close relation with the high-dimensional Kuramoto model~\cite{K75,MTG20}. More precisely, if we assume $\BA_{ij} = a_{ij}\I_d$, then the resulting objective function is
\begin{equation}\label{def:hikura}
\min_{\BS_i\in \St(p,d)}~f(\BS) := -\sum_{i<j}a_{ij} \lag \BS_i,\BS_j\rag
\end{equation}
which is equivalent to
\begin{equation}\label{def:hikura2}
\min_{\BS_i\in \St(p,d)}~\sum_{i<j}a_{ij} \| \BS_i - \BS_j\|_F^2.
\end{equation}
We call the matrix $\BA$ coupling matrix: $i$-th and $j$-th oscillators are attractive (or repulsive) if $a_{ij}> 0$ (or $a_{ij}<0$).
The high-dimensional homogeneous Kuramoto model on Stiefel manifold is derived by considering the gradient flow of $f(\BS)$ on the manifold:
\begin{equation}\label{def:gs}
\frac{\diff \BS_i(t)}{\diff t} = -\PP_{\BS_i}\left( \frac{\pa f}{\pa \BS_i} \right)
\end{equation}
where $\PP_{\BS_i}(\cdot)$ denotes the projection of the input onto the tangent space of $\BS_i$ on the ${\cal M} := \St(p,d)$:
\begin{equation}\label{def:tan}
T_{\BS_i}({\cal M}) = \{\BPhi: \BPhi\BS_i^{\top} + \BS_i\BPhi^{\top} = 0\}.
\end{equation}
In particular, if $d=1$ and $p=2$, each $\BS_i$ can be parameterized as $\BS_i = (\cos\theta_i,\sin\theta_i)\in\RR^2$, then the gradient system simplifies to 
\[
\frac{\diff \theta_i}{\diff t} = \sum_{j=1}^n a_{ij}\sin(\theta_j-\theta_i)
\]
that exactly equals the celebrated homogeneous Kuramoto model~\cite{K75}. 

The core question regarding the gradient system~\eqref{def:gs} is whether all the oscillators will finally synchronize {globally}: $\lim_{t\rightarrow\infty}\| \BS_i(t)- \BS_j(t)\|_F = 0$ for all $i\neq j$ {from a generic initialization}. Suppose that the Laplacian associated with the coupling matrix $\{a_{ij}\}_{i,j}$ is positive semidefinite, then the fully synchronized state $\BS_i = \BS_j$ achieves the global minimizer to the energy function~\eqref{def:hikura2}, which follows from the SDR~\eqref{def:sdr} and its optimality condition~\eqref{def:cert}.
However, it is not clear in general whether the system exhibits multi-stability, i.e., the system~\eqref{def:gs} has multiple stable equilibria besides the fully synchronized state. {Therefore, the landscape analysis of~\eqref{def:hikura2} is a crucial step towards a comprehensive understanding of the global synchronization of high-dimensional Kuramoto model from any generic initialization.}

A very recent work~\cite{MB23} suggests as long as $p \geq d+2$ and the network is connected with no repulsion terms (none of the $a_{ij}$ are negative), the landscape is benign and the system admits global synchrony. This extends and refines the results in~\cite{MTG17,MTG20}. 
The scenario with $d=1$ and $p=2$ is particularly interesting: several works have shown that the optimization landscape is benign for sufficiently dense networks~\cite{LXB19,KST21,TSS20}. Another compelling instance involves the global synchronization on random graphs: it is shown to achieve global synchronization on Erd\"os-R\'enyi graph~\cite{ABK22,KST22} with high probability provided that the random graph is connected. However, our understanding of the global synchronization for {\em signed} networks is considerably less developed. Therefore, this work aims to answer the following question:
\begin{equation*}
\text{\em Does the high-dimensional Kuramoto model synchronize globally with many repulsive edges?}
\end{equation*}
Our answer is positive under a statistical model: roughly speaking, even if nearly (but fewer than) half of edges are repulsive $(a_{ij} < 0)$, the high-dimensional Kuramoto model still has the global synchrony.

\paragraph{Monotone adversaries}

Spectral methods~\cite{A17,AFWZ20} and SDR~\cite{ABH15,B18,HWX16} have been successfully employed to recover the planted partition in the community detection under the stochastic block model (SBM). There is interest in designing algorithms to solve combinatorial problem such as matrix completion~\cite{CG18} and community detection~\cite{MPW16} that are robust to semi-random model~\cite{FK01}. One such semi-random model is termed ``monotone adversaries" which involves modifying the original data in a way that seemingly helps the recovery. For example, for an adjacency matrix drawn from the SBM, we may add edges between vertices in the same community and delete some edges between different communities. However, it is not hard to show spectral methods or other local methods fail to handle the monotone adversaries in community detection~\cite{MPW16} and planted clique problem~\cite{WX23}. On the other hand, the SDR has been shown to be robust against such monotone adversaries~\cite{FK01}. Given that the SDR is expensive to solve, one is interested in finding an efficient algorithm that remains robust to monotone adversaries and delivers comparable results. The Burer-Monteiro factorization could serve as a potential candidate for this purpose. However, we know that this factorization approach is not completely robust for $d=1$ and $p=2$. In~\cite{LXB19}, it is found that adding one more sample may introduce a new local  (but non-global) minimizer in the optimization landscape of the objective function. Our focus here is to answer:
\begin{equation*}
\text{\em Is the Burer-Monteiro factorization robust to monotone adversaries if } p > 2?
\end{equation*}
We provide a solution to the case when $p\geq 4$, and prove that adding ``favorable adversaries" will not worsen the optimization landscape, i.e., no new local minimizers are created.

In summary, our work makes several key contributions. Firstly, we provide a deterministic guarantee of a benign optimization landscape, which only depends on $p$, $d$, and the condition number of the certificate matrix (also referred to as the Laplacian)~\eqref{def:cert} at the global minimizer. 
In other words, {\em the local geometry at the global minimizer determines the global optimization landscape}. {To the best of our knowledge, it is the first result of this kind that guarantees the global landscape with one key parameter, i.e., the condition number of the Laplacian. Our results establish a general theorem that characterizes the global optimization landscape of the Burer-Monteiro factorization. In other words, if one has established the tightness of the SDR for synchronization, one can immediately show the benignness of the optimization landscape w.r.t. the Burer-Monteiro factorization by using the main results in this work.}

Given the generality of our main theorem, it easily applies to high-dimensional Kuramoto model, $\mathbb{Z}_2$-synchronization and the community detection under the SBM. In these contexts, we demonstrate that for $d=1$ and $p\geq 4$, the optimization landscape of~\eqref{def:bm} is benign even if the signal-to-noise ratio (SNR) in the data is close to the performance bound offered by the SDR. {This significantly improves the existing results~\cite{BBV16,L23,MMMO17} in the aforementioned examples and reduces the general Pataki's bound~\cite{BM05,P98,BVB20} $p(p+1) > n$ to $p\geq 4$ for a large family of models.}
Moreover, we show that the Burer-Monteiro factorization is robust against monotone adversaries: introducing ``favorable" modifications to the data will not worsen the landscape. {This robustness result applies to several semi-random models.}
Lastly, we extend these findings to the case of $d\geq 2$, providing a guarantee on the optimization landscape arising in orthogonal group synchronization and generalized Procrustes problem as special cases.  {Our findings extend the benign landscape analysis of high-dimensional Kuramoto model on sphere and Stiefel manifold~\cite{MTG17,MB23,MTG20} from nonnegative adjacency matrix to any general coupling matrix whose corresponding ``certificate" matrix is well-conditioned.}
%Our result extends and improves the previous work in several ways:
%\item Our result improves the previous work on the benign landscape analysis of the Burer-Monteiro factorization for $\mathbb{Z}_2$-synchronization, community detection, $\Od$-synchronization, and Procrustes problem~\cite{BBV16,L23,MMMO17}. The current sufficient condition on the benign landscape almost matches the information-theoretical limits for the exact recovery by the SDR~\cite{B18}. 

 %Our result improves the general characterization of the optimization landscape of Burer-Monteiro factorization by reducing the Pataki's bound~\cite{BM05,P98,BVB20} $p(p+1) > n$ to $p\geq 4$ for a large family of models and applications by using the condition number of the certificate matrix. 

%Our findings extend~\cite{MB23} to signed networks and more importantly, our results are also applicable to very general coupling matrices. 

\section{Main theorems}

Before proceeding, we go over some notations that will be used. We denote boldface $\bx$ and $\BX$ as a vector and matrix respectively, and $\bx^{\top}$ and $\BX^{\top}$ are their corresponding transpose. The $\I_n$ and $\BJ_n$ stands the $n\times n$ identity  and constant ``1" matrix of size $n\times n$. For a vector $\bx$, $\diag(\bx)$ is a diagonal matrix whose diagonal entries are given by $\bx.$
For two matrices $\BX$ and $\BY$ of the same size, $\BX\circ \BY$ is the Hadamard product, $\lag \BX,\BY\rag = \sum_{i,j}X_{ij}Y_{ij}$ is their inner product, and $\BX\otimes\BY$ is their Kronecker product. For any matrix $\BX$, the Frobenius, nuclear, and operator norms are denoted by $\|\BX\|_F$, $\|\BX\|_*$, and $\|\BX\|$ respectively. For two nonnegative functions $f(x)$ and $g(x)$, we say $f(x)\lesssim g(x)$ (and $f(x) \gtrsim g(x)$) if there exists a universal constant $C> 0$ such that $f(x) \leq Cg(x)$ (and $f(x)\geq Cg(x)$).

\vskip0.2cm

Our main results center on the sufficient condition of a benign optimization landscape of~\eqref{def:bm}. We will divide the discussion into two scenarios: $d=1$ and $d\geq 2$.

\subsection{Theorem for $d=1$}
For $d=1$, the measurement $\BA_{ij} = a_{ij}$ is a scalar, and assume the SDR~\eqref{def:sdr} is tight, i.e., the global minimizer to~\eqref{def:sdr} equals $\BX=\bx\bx^{\top}$ for some $\bx\in\{\pm 1\}^n,$ and 
\begin{equation}\label{def:Ld1}
\BL = \BDG(\BA\bx\bx^{\top}) - \BA\succeq 0
\end{equation}
has the second smallest eigenvalue of $\BL$ strictly positive. It is straightforward to verify that $\BL \bx = 0$ and $\bx$ is the unique global minimizer to~\eqref{def:ls} under~\eqref{def:Ld1}.
We consider the corresponding optimization landscape of~\eqref{def:bm} and our following result gives a sufficient condition for the absence of spurious local minima.

\begin{theorem}\label{thm:main1}
Suppose $\BX=\bx\bx^{\top}$ with $\bx\in\{\pm 1\}^n$ is a global minimizer to~\eqref{def:sdr} and the certificate matrix $\BL$ has a strictly positive second smallest eigenvalue $\lambda_2(\BL)$, {i.e.,~\eqref{def:Ld1} holds}. 
The optimization landscape of~\eqref{def:bm} is benign if
\begin{equation}\label{eq:thm1}
p \geq \frac{2\lambda_{\max}(\BL)}{\lambda_2(\BL)} + 1.
\end{equation}
The only local minimizer {to~\eqref{def:bm}} is also global and satisfies $\BS\BS^{\top} = \bx\bx^{\top}$ with $\BS\in(\mathbb{S}^{p-1})^{\otimes n}$ {where $\mathbb{S}^{p-1}$ is the unit sphere in $\RR^p$ and each row of $\BS$ is in $\mathbb{S}^{p-1}$}.
In particular, every point except the global minimizer has a negative curvature direction. 
\end{theorem}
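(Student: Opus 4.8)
The plan is to analyze the Riemannian optimization landscape of~\eqref{def:bm} directly by writing down the first- and second-order optimality conditions on the product of spheres $(\mathcal{S}^{p-1})^{\otimes n}$, and to show that any second-order critical point (first-order stationary with PSD Riemannian Hessian) must satisfy $\BS\BS^{\top}=\bx\bx^{\top}$. First I would set up the Riemannian gradient and Hessian of $f(\BS)=-\tfrac12\lag\BA,\BS\BS^{\top}\rag$. Because each $\BS_i\in\mathcal{S}^{p-1}$, the tangent space at $\BS$ consists of $\BU$ with $\lag\BS_i,\BU_i\rag=0$ for all $i$, and the projection onto it subtracts off the radial components. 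The first-order condition $\grad f(\BS)=0$ says that, defining the ``dual'' matrix $\BLambda(\BS):=\ddiag(\BA\BS\BS^{\top})$ (the block-diagonal of row-sums, which for $d=1$ is $\diag(\sum_j a_{ij}\lag\BS_i,\BS_j\rag)$), one has $(\BLambda(\BS)-\BA)\BS=0$; i.e.\ $\BS$ lies in the kernel of a Laplacian-type matrix $\BL(\BS):=\BLambda(\BS)-\BA$ built from $\BS$ itself, exactly paralleling~\eqref{def:Ld1}.

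Next I would bring in the second-order condition. For any tangent direction $\BU$, the Riemannian Hessian quadratic form evaluates to something of the shape $\lag\BU,(\BLambda(\BS)-\BA)\BU\rag$ minus a correction coming from the curvature of the spheres, namely $\sum_i \|\BLambda_i(\BS)\| \,\|\BU_i\|^2$-type terms (the normal-component penalty). The key trick, standard in this line of work (as in~\cite{BBV16,L23}), is to feed in cleverly chosen test directions $\BU$ built from a fixed vector $\bc\in\RR^p$ and the entries of $\bx$: take $\BU_i = \alpha_i \bc$ with $\alpha_i$ chosen orthogonal to $\BS_i$ componentwise and correlated with the candidate minimizer, then average over $\bc$ uniformly on $\mathcal{S}^{p-1}$. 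Averaging kills cross terms and converts the Hessian inequality into a scalar inequality relating $\lag\bx\circ\bullet, \BL(\bx\circ\bullet)\rag$-quantities and the trace of $\BL(\BS)$; the hypothesis $p\geq 2\lambda_{\max}(\BL)/\lambda_2(\BL)+1$ is precisely what makes the averaged normal-curvature gain ($\sim \tr(\BLambda(\BS))/p$, bounded below using $\lambda_2(\BL)$ via the structure of $\BL$) dominate the possible loss ($\sim\lambda_{\max}$-type bound), forcing $\BS$ to be rank-one, i.e.\ $\BS_i=\pm\bv$ for a common $\bv\in\mathcal{S}^{p-1}$, with the signs matching $\bx$ up to a global rotation — hence $\BS\BS^{\top}=\bx\bx^{\top}$. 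The ``negative curvature'' clause then follows because the argument actually exhibits, at any non-global second-order stationary point, a concrete tangent direction along which the Hessian form is strictly negative, contradicting stationarity.

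**The hard part will be** the averaging/test-direction step: choosing the family of tangent perturbations so that (i) it is genuinely tangent at every candidate critical point $\BS$, (ii) the cross terms vanish after integrating over $\bc$, and (iii) the resulting scalar inequality cleanly isolates the ratio $\lambda_{\max}(\BL)/\lambda_2(\BL)$ rather than some weaker spectral quantity. In particular one must relate the data-dependent matrix $\BLambda(\BS)$ at an arbitrary critical point back to the certificate matrix $\BL$ at the global minimizer — this requires exploiting the first-order condition $\BL(\BS)\BS=0$ together with $\BL\bx=0$ and the PSD-ness of $\BL$, and carefully tracking that $\lambda_2(\BL)$ (not $\lambda_{\min}=0$) controls the relevant Rayleigh quotients on the orthogonal complement of $\bx$. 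Once the scalar inequality is in hand, deriving the rank-one conclusion and the global-rotation uniqueness is comparatively routine (it amounts to an equality case analysis forcing all $\BU_i$-perturbations to vanish, then a connectivity argument from $\lambda_2(\BL)>0$). I would also need to handle the sign/rotation ambiguity: since $(\mathcal{S}^{p-1})^{\otimes n}$ has a global $\Od(p)$ symmetry, ``the only local minimizer'' is understood modulo this action, and I would state the conclusion as $\BS\BS^{\top}=\bx\bx^{\top}$ accordingly.
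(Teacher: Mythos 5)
Your roadmap matches the paper's: evaluate the Riemannian Hessian at an arbitrary second-order critical point on a randomized tangent direction built from the certificate $\bx$, average, and show that the hypothesis on $p$ forces the expectation to be nonpositive, hence zero, hence $\BS\BS^{\top}=\bx\bx^{\top}$. You also correctly identify that the first-order condition lets one rewrite $\BDG(\BA\BS\BS^{\top})-\BA$ at the critical point as $\BL-\BDG(\BL\BS\BS^{\top})$; that is the right pivot. The test direction, however, should be stated precisely: for $d=1$ the paper takes $\dot{\BS}_i = x_i\bigl(\bphi - \lag\bphi,\BS_i\rag\BS_i\bigr)$ with $\bphi\sim\mathcal{N}(0,\I_p)$, i.e.\ the common Gaussian projected to the tangent space at $\BS_i$ and scaled by $x_i$; your ``$\BU_i=\alpha_i\bc$ with $\alpha_i$ orthogonal to $\BS_i$'' does not parse (for $d=1$, $\alpha_i$ should just be the scalar $x_i$, and it is $\bc$, not $\alpha_i$, that gets tangent-projected). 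The $x_i$ scaling is essential: it makes the matrix $\BX$ with $X_{ij}=x_ix_j\lag\BS_i,\BS_j\rag$ appear, which is PSD with unit diagonal and satisfies $\lag\BX,\BJ_n\rag=\|\diag(\bx)\BS\|_F^2$.

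The genuine gap is the algebraic lemma that yields the sharp factor $2$. After taking expectations, the Hessian form splits as $\lag\BL,\E\dot{\BS}\dot{\BS}^{\top}\rag - (p-1)\lag\BL,\BS\BS^{\top}\rag$, where (using $\BL\bx=0$ and Von Neumann's trace inequality) the first piece is bounded above by $\lambda_{\max}(\BL)\,(n - n^{-1}\|\BX\|_F^2)$ and the second below by $\lambda_2(\BL)\,(n - n^{-1}\lag\BX,\BJ_n\rag)$. These two deficits are incommensurate until you invoke the paper's Lemma~\ref{lem:Xineq}: for PSD $\BX$ with unit diagonal, $n - n^{-1}\|\BX\|_F^2 \leq 2\,(n - n^{-1}\lag\BX,\BJ_n\rag)$, a one-line Cauchy–Schwarz argument that is nonetheless the crux of getting $2\lambda_{\max}/\lambda_2$ rather than a lossier ratio. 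Your proposal never isolates such a comparison inequality, and your heuristic ``gain $\sim \tr(\BLambda(\BS))/p$ vs.\ loss $\sim\lambda_{\max}$'' has the $p$-dependence inverted (the curvature gain grows like $p$, since $\E\|P_{\BS_i^{\perp}}\bphi\|^2 = p-1$, so it appears as $(p-1)\lag\BL,\BS\BS^{\top}\rag$). Writing out the two expectation bounds explicitly and then finding the right $\|\BX\|_F^2$-vs.-$\lag\BX,\BJ_n\rag$ comparison is the step you would still need to discover.
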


{We give a few remarks on our result.
Note that this sufficient condition is deterministic and it only depends on the spectra of the certificate matrix $\BL$. It corresponds to the Riemannian Hessian at the global minimizer. Therefore, the condition number of $\BL$, i.e., $\lambda_{\max}(\BL)/\lambda_2(\BL)$ provides a geometric interpretation: if there exists a non-global second order critical point, then the local ``contour" at the global minimizer has to be sufficiently ill-conditioned, i.e., the condition number $\lambda_{\max}(\BL)/\lambda_2(\BL)$ is large enough.

Besides that, Theorem~\ref{thm:main1} implies that every point has a negative curvature direction, i.e., the second derivative along that direction is negative, which means any non-optimal point has at least one direction of descent. Therefore, any algorithm that can take advantage of this property will likely converge to the global minimizer.}
 We are particularly interested in a special family of $\BL$ whose eigenvalues are close to one another, with applications to high-dimensional Kuramoto model, $\mathbb{Z}_2$-synchronization, community detection, {as well as the synchronization problem with incomplete observations.}

\begin{corollary}\label{cor:main1}
Suppose $\BA$ and $\bx$ satisfy~\eqref{def:Ld1}, and moreover
\begin{equation}\label{cond:exp}
\|\BL - r(\I_n - n^{-1}\bx\bx^{\top})\| \leq \alpha r
\end{equation}
for some $0<\alpha<1$ and $r>0$.  Then the optimization landscape is benign if
\[
p \geq \frac{3+\alpha}{1-\alpha}~  \Longleftrightarrow ~0\leq \alpha  \leq \frac{p-3}{p+1}.
\]
\end{corollary}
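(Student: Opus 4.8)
The plan is to obtain Corollary~\ref{cor:main1} as a direct consequence of Theorem~\ref{thm:main1}: the only thing to do is convert the perturbation hypothesis \eqref{cond:exp} into a two-sided control on $\lambda_2(\BL)$ and $\lambda_{\max}(\BL)$, and then plug this into the ratio condition \eqref{eq:thm1}. The hard analytic work is entirely contained in Theorem~\ref{thm:main1}; the corollary is a short eigenvalue computation plus a bookkeeping step.

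First I would observe that, since $\bx\in\{\pm1\}^n$ gives $\|\bx\|^2=n$, the matrix $P:=\I_n-n^{-1}\bx\bx^{\top}$ is the orthogonal projector onto $\bx^{\perp}$. Hence $rP$ has eigenvalue $0$ with eigenvector $\bx$ and eigenvalue $r$ with multiplicity $n-1$. Writing $\BL=rP+(\BL-rP)$ with $\|\BL-rP\|\le\alpha r$ and applying Weyl's inequality to the eigenvalues listed in increasing order, one gets $|\lambda_1(\BL)|\le\alpha r$ and $|\lambda_k(\BL)-r|\le\alpha r$ for all $k\ge2$. The one subtlety is that this bound a priori allows the zero eigenvalue to "leak" into the estimate of $\lambda_2$; this is ruled out by the standing hypothesis \eqref{def:Ld1}, which says $\BL\succeq0$ and $\BL\bx=0$, so $\lambda_1(\BL)=0$ exactly. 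Consequently $\lambda_2(\BL)\ge(1-\alpha)r>0$ (using $\alpha<1$ and $r>0$, which also verifies the strict-positivity requirement of Theorem~\ref{thm:main1}) and $\lambda_{\max}(\BL)\le(1+\alpha)r$.

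Next I would substitute these bounds into \eqref{eq:thm1}:
\[
\frac{2\lambda_{\max}(\BL)}{\lambda_2(\BL)}+1\;\le\;\frac{2(1+\alpha)r}{(1-\alpha)r}+1\;=\;\frac{2(1+\alpha)+(1-\alpha)}{1-\alpha}\;=\;\frac{3+\alpha}{1-\alpha}.
\]
Therefore $p\ge\frac{3+\alpha}{1-\alpha}$ implies $p\ge\frac{2\lambda_{\max}(\BL)}{\lambda_2(\BL)}+1$, and Theorem~\ref{thm:main1} yields a benign landscape with the stated description of the unique minimizer. Finally, the equivalence $p\ge\frac{3+\alpha}{1-\alpha}\Longleftrightarrow\alpha\le\frac{p-3}{p+1}$ is elementary: since $1-\alpha>0$ we may multiply through to obtain $p(1-\alpha)\ge3+\alpha$, i.e.\ $p-3\ge\alpha(p+1)$, i.e.\ $\alpha\le\frac{p-3}{p+1}$.

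I do not anticipate a genuine obstacle: the only point requiring care is the interaction between the Weyl bound and the exact zero eigenvalue of $\BL$, which is handled by invoking \eqref{def:Ld1}. Everything else is a one-line estimate.
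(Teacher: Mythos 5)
Your proof is correct and follows the paper's own argument: derive the sandwich $(1-\alpha)r \le \lambda_2(\BL) \le \lambda_{\max}(\BL) \le (1+\alpha)r$ from \eqref{cond:exp} and substitute into the ratio condition \eqref{eq:thm1}. The paper states the eigenvalue bounds without comment while you justify them via Weyl's inequality; the ``leak'' concern you raise is in fact a non-issue, since Weyl already gives $\lambda_2(\BL) \ge \lambda_2\left(r(\I_n - n^{-1}\bx\bx^{\top})\right) - \alpha r = (1-\alpha)r > 0$ directly, so the appeal to \eqref{def:Ld1} is only needed for Theorem~\ref{thm:main1} to apply, not for the eigenvalue estimate itself.
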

The corollary above implies that at least we need to have $p\geq 4$ under the assumption~\eqref{cond:exp}. The proof follows simply from~\eqref{cond:exp}, which indicates
\[
(1-\alpha)r \leq \lambda_2(\BL) \leq \lambda_{\max}(\BL) \leq (1+\alpha)r \Longrightarrow \frac{\lambda_{\max}(\BL)}{\lambda_2(\BL)} \leq \frac{1+\alpha}{1-\alpha}
\]
and then applying~\eqref{eq:thm1} in Theorem~\ref{thm:main1} gives the corollary.

Now we proceed to discuss what Corollary~\ref{cor:main1} achieves in applications, and then compare the results with the state-of-the-art sufficient conditions on the benign landscape of $\mathbb{Z}_2$-synchronization and community detection~\cite{BBV16,L23}.

\paragraph{Example: high-dimensional Kuramoto model} 
Given an adjacency matrix $\BA$ with nonnegative entries,~\cite{MB23} implies that for $p\geq 3$ and $d=1$, the corresponding gradient system~\eqref{def:hikura} synchronizes globally. This means there is only one stable equilibrium (all oscillators $\{\BS_i\}_{i=1}^n$ are the same) which is also the unique local minimizer to the energy function~\eqref{def:bm}. However, it is not clear if the coupling $a_{ij}$ between $i$-th and $j$-th oscillators can also be negative, i.e., some pairs of oscillators are repulsive.

We consider a statistical model of signed network and study the global synchrony of the associated high-dimensional Kuramoto model:
\begin{equation}\label{eq:kura_model}
A_{ij} = 
\begin{cases}
1, & \text{with probability }1-\theta, \\
-1, & \text{with probability } \theta.
\end{cases}
\end{equation}
To better represent the model, we let $\BB$ be an adjacency matrix drawn from the Erd\"os-R\'enyi graph with $B_{ij} = B_{ji}\sim$Ber($\theta$), and then
\[
\BA = \BJ_n - 2\BB.
\]
The parameter $\theta$ controls the number of negative edges in the signed network. In particular, if $\theta=0$, i.e., $\BA$ is a complete graph, we know that the corresponding Kuramoto model is globally synchronizing. Now what if the number of negative edges increases, i.e., there are many repulsions? Does the global synchrony still hold? 

We first look into when the global minimum of~\eqref{def:bm} with $\BA = \BJ_n - 2\BB$ is attained at $\BS \BS^{\top} = \BJ_n$. This corresponds to the case in which the fully synchronized states $\BS_i = \BS_j$ achieves the global minimizer to the energy function~\eqref{def:hikura2}.
A sufficient condition is given by
\[
\BL = \BDG(\BA \BJ_n) - \BA \succeq 0
\]
with the second smallest eigenvalue strictly positive, which guarantees the global optimality of $\BJ_n$ to~\eqref{def:sdr}. 

For the Laplacian $\BL$, we have
\begin{align*}
\BL & = \BDG(\E \BA \BJ_n) - \E \BA + \BDG((\BA-\E \BA) \BJ_n) - (\BA-\E \BA) \\
& = (1-2\theta) (n \I_n - \BJ_n) -2\left[ (\BDG(\BB \BJ_n) - \BB) -  (\BDG(\E \BB \BJ_n) - \E \BB) \right] \\
& = (1-2\theta) (n \I_n - \BJ_n) -2\left[ (\BDG(\BB \BJ_n) - \BB) -  \theta n(\I_n - \BJ_n/n) \right] 
\end{align*}
where $\E \BA = (1-2\theta)\BJ_n$, $\E \BB = \theta\BJ_n$ and $\BA - \E \BA = -2(\BB - \E \BB).$ Note that
\[
\BDG(\BB\BJ_n) - \BB = {\underbrace{\sum_{i<j}B_{ij}(\be_i-\be_j)(\be_i-\be_j)^{\top}}_{\BX_{ij}\succeq 0}}
\]
is positive semidefinite where $\be_i$ is the $i$-th one-hot vector{, and it is a sum of $n(n-1)/2$ independent rank-1 positive semidefinite matrices $\{\BX_{ij}\}_{i<j}$.}

Using the Chernoff inequality~\cite[Theorem 1.1]{T12} {with each independent component $\BX_{ij}$ satisfying
\[
\|\BX_{ij}\| = \left\| B_{ij}(\be_i-\be_j)(\be_i-\be_j)^{\top}\right\| \leq 2,
\] 
} the operator norm of $\BDG((\BB-\E \BB) \BJ_n) - (\BB-\E \BB)$ is bounded by
\[
\Pr(\|\BDG(\BB \BJ_n) - \BB- \theta n(\I_n - \BJ_n/n)\| \leq t \theta n) \leq 2n \exp(-t^2 \theta n /6 ) = 2n^{-\gamma}
\]
where {$\sum_{i<j}\E \BX_{ij} = \theta\sum_{i<j}(\be_i-\be_j)(\be_i-\be_j)^{\top} = \theta n (\I_n - \BJ_n/n)$} and
\[
t^2\theta n  =6 (\gamma+1)\log n \Longleftrightarrow t = \sqrt{\frac{6(\gamma+1)\log n}{\theta n}}.
\]
As a result, the eigenvalues of $\BL$ satisfy
\[
(1-2\theta)n - 2 \sqrt{6(\gamma+1)\theta n \log n} \leq \lambda_2(\BL) \leq \lambda_{\max}(\BL) \leq(1-2\theta)n + 2 \sqrt{6(\gamma+1)\theta n\log n}.
\]
By setting $r = 1-2\theta$ and
\[
\alpha = \frac{2}{1-2\theta} \sqrt{\frac{6(\gamma+1)\theta \log n}{n}}
\]
in Corollary~\ref{cor:main1}, it suffices to have $0<\alpha \leq (p-3)/(p+1).$
To simplify this sufficient condition, we use $\theta < 1/2$ and then
\[
\frac{2}{1-2\theta} \sqrt{\frac{6(\gamma+1) \log n}{2n}} \leq \frac{p-3}{p+1} \Longleftrightarrow  \theta \leq \frac{1}{2} - \frac{p+1}{p-3} \sqrt{\frac{3(\gamma+1) \log n}{n}}.
\]
{It is worth noting that using other concentration inequalities such as~\cite{T12,BBV23,BV24} yields a sharper bound on controlling the spectra of $\BL$ for small $\theta$. As we focus on the regime of $\theta < 1/2$, the final resulting bound on $\theta$ will not improve.}
Now we write our discussion into a corollary.
\begin{corollary}\label{cor:km}
Consider the high-dimensional Kuramoto model on the signed network~\eqref{eq:kura_model}.
Suppose the sampling probability $\theta$ of a negative edge satisfies
\begin{equation}\label{cor:km_cond}
\theta \leq \frac{1}{2} - \frac{p+1}{p-3} \sqrt{\frac{3(\gamma+1) \log n}{n}},~~~~p\geq 4,
\end{equation}
then with probability at least $1 - 2n^{-\gamma}$, the high-dimensional Kuramoto model~\eqref{def:gs} has only one stable equilibrium that is also global and equals the fully synchronized state, i.e., $\BS\BS^{\top} = \BJ_n.$
\end{corollary}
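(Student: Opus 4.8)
The plan is to check that the random certificate matrix fits the hypotheses of Corollary~\ref{cor:main1} with $\bx=\bone$, and then read off the dynamical conclusion from Theorem~\ref{thm:main1}. First I would use the decomposition already displayed before the corollary, $\BL = (1-2\theta)(n\I_n - \BJ_n) - 2\lp(\BDG(\BB\BJ_n)-\BB) - \theta n(\I_n - n^{-1}\BJ_n)\rp$, together with the observation that, for $d=1$, $\BDG(\BB\BJ_n)=\diag(\BB\bone)$ is the diagonal matrix of row sums, so both $\BDG(\BB\BJ_n)-\BB$ and the deterministic term annihilate $\bone$. Hence $\BL\bone=0$ automatically and the whole spectral analysis lives on $\bone^{\perp}$, where the deterministic part acts as $r\I$ with $r=(1-2\theta)n$.

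The only substantive step is concentration of the fluctuation. Writing $\BDG(\BB\BJ_n)-\BB=\sum_{i<j}B_{ij}(\be_i-\be_j)(\be_i-\be_j)^{\top}$ displays it as a sum of independent PSD rank-one matrices, each of operator norm $2$, with mean $\theta\sum_{i<j}(\be_i-\be_j)(\be_i-\be_j)^{\top}=\theta(n\I_n-\BJ_n)$, whose only nonzero eigenvalue is $\theta n$. Matrix Chernoff~\cite[Theorem 1.1]{T12}, applied on $\bone^{\perp}$, then gives the two-sided bound $\|(\BDG(\BB\BJ_n)-\BB)-\theta n(\I_n-n^{-1}\BJ_n)\|\le t\theta n$ with failure probability at most $2n\exp(-t^2\theta n/6)$; taking $t=\sqrt{6(\gamma+1)(\log n)/(\theta n)}$ makes this $2n^{-\gamma}$. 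On the good event, $(1-2\theta)n-2\sqrt{6(\gamma+1)\theta n\log n}\le\lambda_2(\BL)\le\lambda_{\max}(\BL)\le(1-2\theta)n+2\sqrt{6(\gamma+1)\theta n\log n}$; under the assumed bound on $\theta$ the left side is $>0$, so $\BL\succeq 0$ and $\lambda_2(\BL)>0$. Combined with $\BL\bone=0$, this certifies via~\eqref{def:cert} that $\BX=\BJ_n=\bone\bone^{\top}$ is a global minimizer of the SDR~\eqref{def:sdr} with positive second eigenvalue — the standing hypothesis of Theorem~\ref{thm:main1}.

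It remains to match condition~\eqref{cond:exp}: with $r=(1-2\theta)n$ and $\alpha=\frac{2}{1-2\theta}\sqrt{6(\gamma+1)\theta(\log n)/n}$ one has $\|\BL-r(\I_n-n^{-1}\bx\bx^{\top})\|\le\alpha r$ on the good event. The remaining inequality is purely algebraic: bounding $\sqrt{\theta}<1/\sqrt 2$ via $\theta<1/2$, the assumption $\theta\le\frac12-\frac{p+1}{p-3}\sqrt{3(\gamma+1)(\log n)/n}$ is equivalent to $\alpha\le(p-3)/(p+1)$, which is exactly what Corollary~\ref{cor:main1} requires for $p\ge4$. Therefore the landscape of~\eqref{def:bm} with $\BA=\BJ_n-2\BB$ is benign on the good event, and its unique local minimizer is global with $\BS\BS^{\top}=\BJ_n$, i.e.\ all $\BS_i$ coincide.

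Finally I would translate this into a statement about the gradient flow~\eqref{def:gs}: its equilibria are the critical points of $f$ on $(\mathcal S^{p-1})^{\otimes n}$, and by the last assertion of Theorem~\ref{thm:main1} every critical point other than the global minimizer has a strictly negative curvature direction, hence is a linearly unstable equilibrium. Thus the fully synchronized state $\BS\BS^{\top}=\BJ_n$ is the only stable equilibrium, which holds with probability at least $1-2n^{-\gamma}$. I expect the main (and essentially only) obstacle to be making the matrix-concentration step clean — in particular keeping track that the mean of the fluctuation is $\theta(n\I_n-\BJ_n)$, not $\theta n\I_n$, so that Chernoff must be run on $\bone^{\perp}$ (equivalently, noting the fluctuation inherits the kernel $\bone$), and pinning down the constant $6$ in the exponent from the chosen form of~\cite[Theorem 1.1]{T12}; the rest is bookkeeping that has essentially been carried out in the displays preceding the corollary.
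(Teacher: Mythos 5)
Your proposal is correct and follows essentially the same route as the paper: the same decomposition of $\BL$ into the deterministic part $(1-2\theta)(n\I_n-\BJ_n)$ plus twice the centered sum $\sum_{i<j}B_{ij}(\be_i-\be_j)(\be_i-\be_j)^{\top}$ of independent PSD rank-one matrices, the same matrix Chernoff step with $t=\sqrt{6(\gamma+1)\log n/(\theta n)}$ and failure probability $2n^{-\gamma}$, and the same reduction to Corollary~\ref{cor:main1} with $r=(1-2\theta)n$ and the crude bound $\sqrt{\theta}<1/\sqrt{2}$. Your two small additions — noting explicitly that the fluctuation already annihilates $\bone$ so the concentration bound is genuinely a statement on $\bone^{\perp}$, and spelling out that the ``negative curvature direction'' clause of Theorem~\ref{thm:main1} is what converts the landscape statement into ``every equilibrium except the synchronized one is linearly unstable'' — are implicit in the paper but worth making explicit as you did.
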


In other words, the Kuramoto model~\eqref{def:gs} on $\mathbb{S}^{p-1}$ has the global synchrony even if nearly half of the edges are repulsive for $p\geq 4.$

\paragraph{Example: $\mathbb{Z}_2$-synchronization.} In the $\mathbb{Z}_2$-synchronization under the additive Gaussian noise, the observation equals
\begin{equation}\label{def:z2}
\BA = \bx\bx^{\top} +\sigma\BW
\end{equation}
where $\bx\in\{\pm 1\}^n$ and $\BW$ is an $n\times n$ Gaussian random matrix, i.e., $W_{ij} = W_{ji}\overset{i.i.d.}{\sim}\mathcal{N}(0,1)$ and $W_{ii} = 0$. The aim is to estimate $\bx$ from $\BA$ and the exact recovery via SDR is guaranteed if 
\[
\BL = \BDG(\BA\bx\bx^{\top}) - \BA = n\left(\I_n - \frac{\bx\bx^{\top}}{n}\right) + \sigma(\BDG(\BW\bx\bx^{\top}) - \BW)\succeq 0.
\]
To use Theorem~\ref{thm:main1}, we need to estimate the top and second smallest eigenvalues of $\BL$. % its second smallest value is lower bounded by
%\[
%\lambda_2(\BL) \geq n - \sigma (\|\BW\bx\|_{\infty} + \|\BW\|) \geq n - 4\sigma \sqrt{n\log n}
%\]
%and the largest one satisfies
%\[
%\lambda_{\max}(\BL) \leq n + \sigma (\|\BW\bx\|_{\infty} + \|\BW\|) \leq n + 4\sigma \sqrt{n\log n}
%\]
%where $\|\BW\bx\|_{\infty} + \|\BW\| \leq 4\sqrt{n\log n}$ holds with high probability, following from the fact that each entry is $[\BW\bx]_i$ is $\mathcal{N}(0,n-1)$ and $\|\BW\| \leq 2\sqrt{n}(1+o(1)).$ Based on the calculation above, we can see if $\sigma < 4^{-1}\sqrt{n/\log n}$, then the SDR is tight and recovers $\bx\bx^{\top}$ as the global minimizer. Regarding the optimization landscape of~\eqref{def:bm}, 
%using $\alpha = 4\sqrt{n^{-1}\log n}$ and $r=n$ in Corollary~\ref{cor:main1} leads to the following corollary.
{
It suffices to consider 
\begin{align*}
 \BDG(\BW\bx\bx^{\top}) - \BW & = \sum_{i<j} W_{ij} x_ix_j(x_i \be_i - x_j \be_j)(x_i \be_i - x_j \be_j)^{\top}% \\
% & = \sum_{i<j} W_{ij} x_ix_j(\be_i\be_i^{\top} + \be_j \be_j^{\top} - x_ix_j \be_i\be_j^{\top} - x_ix_j\be_j\be_i^{\top}) \\
% & = \sum_{i\neq j}W_{ij}x_ix_j\be_i\be_i^{\top} - \sum_{i<j} W_{ij} (\be_i\be_j^{\top} + \be_j\be_i^{\top})
 \end{align*}
where $W_{ij}x_ix_j\sim\mathcal{N}(0,1)$. By using~\cite[Theorem 1.2]{T12}, we have
\[
\left\| \sum_{i<j} \left((x_i \be_i - x_j \be_j)(x_i \be_i - x_j \be_j)^{\top}\right)^2 \right\| = 2 \left\| \sum_{i<j} (x_i \be_i - x_j \be_j)(x_i \be_i - x_j \be_j)^{\top} \right\| = 2n
\]
and then
\[
\Pr( \| \BDG(\BW\bx\bx^{\top}) - \BW \|  \geq t) \leq 2n \exp(-t^2/4n) = \exp(\log 2n - t^2/4n).
\]
By picking $t^2 = 4n (1+\eps)\log 2n$, it holds that $\| \BDG(\BW\bx\bx^{\top}) - \BW \| \leq 2\sqrt{n(1+\eps)\log 2n}$ with probability at least $1 - (2n)^{-\eps}$.
 Then using Corollary~\ref{cor:main1} with $r = n$ leads to the following result:
\[
\alpha = \frac{\sigma t}{n} = 2\sigma\sqrt{\frac{(1+\eps)\log 2n}{n}} \leq \frac{p-3}{p+1} \Longleftrightarrow \sigma \leq \frac{p-3}{2(p+1)} \sqrt{\frac{n}{(1+\eps) \log 2n}}
\]
with probability at least $1-(2n)^{-\eps}$ for some $\eps>0.$
}{
\begin{corollary}
Consider the $\mathbb{Z}_2$-synchronization with additive Gaussian noise~\eqref{def:z2}. Suppose
\begin{equation}\label{eq:z2}
2\sigma\sqrt{\frac{(1+\eps)\log 2n}{n}} \leq \frac{p-3}{p+1} \Longleftrightarrow \sigma \leq \frac{p-3}{2(p+1)}\sqrt{\frac{n}{(1+\eps)\log 2n}},
\end{equation}
the optimization landscape of the Burer-Monteiro factorization~\eqref{def:bm} is benign  with probability at least $1-(2n)^{-\eps}$, i.e., the Burer-Monteiro factorization with $p$ satisfying~\eqref{eq:z2} does not suffer from any spurious local minimizer. The only local minimizer $\BS\in(\mathbb{S}^{p-1})^{\otimes n}\subset \RR^{n\times p}$ is global and satisfies $\BS\BS^{\top} = \bx\bx^{\top}.$
\end{corollary}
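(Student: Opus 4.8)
The plan is to reduce the statement to Corollary~\ref{cor:main1} by choosing $r=n$ and an appropriate $\alpha$, so that the only genuine work is a concentration estimate on the noise contribution to the certificate matrix $\BL$.

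First I would write $\BL$ out explicitly. Since $\BA=\bx\bx^{\top}+\sigma\BW$ and $\bx\in\{\pm1\}^n$, a direct computation gives $\BL=\BDG(\BA\bx\bx^{\top})-\BA=n(\I_n-n^{-1}\bx\bx^{\top})+\sigma\,\BE$, where $\BE:=\BDG(\BW\bx\bx^{\top})-\BW$. For $d=1$ the operator $\BDG$ just extracts the diagonal, and the diagonal of $\BW\bx\bx^{\top}$ is $(\BW\bx)\circ\bx$; hence $\BDG(\BW\bx\bx^{\top})=\diag\big((\BW\bx)\circ\bx\big)$, whose operator norm equals $\|\BW\bx\|_{\infty}$ because $|x_i|=1$. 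This yields $\|\BE\|\le\|\BW\bx\|_{\infty}+\|\BW\|$. A short check using $x_i^2=1$ also shows $\BDG(\BW\bx\bx^{\top})\bx=\BW\bx$, so $\BE\bx=0$ and therefore $\BL\bx=0$.

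Second, I would invoke standard Gaussian concentration. Each coordinate $(\BW\bx)_i$ is $\mathcal{N}(0,n-1)$, so a union bound over $i=1,\dots,n$ gives $\|\BW\bx\|_{\infty}\le 2\sqrt{n\log n}$ with high probability; and the operator-norm bound for a symmetric Gaussian matrix gives $\|\BW\|\le 2\sqrt{n}(1+o(1))$ with high probability. Combining, $\|\BE\|\le 4\sqrt{n\log n}$ with high probability, hence $\|\BL-n(\I_n-n^{-1}\bx\bx^{\top})\|\le 4\sigma\sqrt{n\log n}$. This is precisely condition~\eqref{cond:exp} with $r=n$ and $\alpha=4\sigma\sqrt{\log n/n}$. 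The same perturbation bound gives the eigenvalue sandwich $\lambda_2(\BL)\ge n-4\sigma\sqrt{n\log n}$ and $\BL\succeq 0$, so once $\sigma<\tfrac14\sqrt{n/\log n}$ the matrix $\BL$ has a strictly positive second smallest eigenvalue and annihilates $\bx$; by the KKT argument recalled after~\eqref{def:cert}, $\bx\bx^{\top}$ is then the global minimizer of the SDR~\eqref{def:sdr}, so the hypotheses of Theorem~\ref{thm:main1} and Corollary~\ref{cor:main1} are met. Since the assumed inequality $4\sigma\sqrt{\log n/n}\le (p-3)/(p+1)<1$ forces both $\sigma<\tfrac14\sqrt{n/\log n}$ and $\alpha\le(p-3)/(p+1)$, Corollary~\ref{cor:main1} applies verbatim and concludes that the landscape of~\eqref{def:bm} is benign, with the unique local minimizer global and satisfying $\BS\BS^{\top}=\bx\bx^{\top}$.

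The main obstacle — indeed essentially the only nontrivial point — is obtaining the high-probability norm bounds with constants clean enough that the stated factor $4$ goes through; but the operator-norm estimate $\|\BW\|\le 2\sqrt{n}(1+o(1))$ and the maximal-Gaussian estimate $\|\BW\bx\|_{\infty}\lesssim\sqrt{n\log n}$ are both classical, so this step is routine rather than deep. Everything else is substitution into Corollary~\ref{cor:main1}.
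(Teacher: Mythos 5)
Your proof is correct and matches the paper's argument exactly: the same decomposition $\BL = n(\I_n - n^{-1}\bx\bx^{\top}) + \sigma(\BDG(\BW\bx\bx^{\top}) - \BW)$, the same bound $\|\BW\bx\|_{\infty} + \|\BW\| \le 4\sqrt{n\log n}$ via Gaussian concentration, and the same invocation of Corollary~\ref{cor:main1} with $r=n$. Incidentally, your choice $\alpha = 4\sigma\sqrt{\log n/n}$ is the correct one; the paper's prose states $\alpha = 4\sqrt{n^{-1}\log n}$ (apparently a typo dropping $\sigma$), but the corollary it derives is consistent with your $\alpha$.
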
}
The studies on the landscape of~\eqref{def:bm} for the $\mathbb{Z}_2$-synchronization started from~\cite{BBV16} which considered $p=2$ and $d=1$. Theorem 4 in~\cite{BBV16} ensures a benign optimization landscape provided that $\sigma \lesssim n^{1/6}$. This is later improved to $\sigma\lesssim \sqrt{\frac{p-2}{p+1}}n^{1/4}$ in~\cite[Theorem 4]{L23} for $p\geq 3$. Compared with the sufficient condition, i.e., $\sigma\lesssim \sqrt{n/\log n}$ for the exact recovery of $\bx$ via the SDR, both results in~\cite{BBV16,L23} are sub-optimal in terms of $n$. On the other hand, our result shows that the optimization is benign with much larger noise, i.e., $\sigma \lesssim\frac{p-3}{p+1}\sqrt{\frac{n}{\log n}}$ for $p\geq 4$. 
Therefore, ours nearly matches the information-theoretical limit for the exact recovery of $\mathbb{Z}_2$-synchronization up to a constant. 

{Our result has been recently extended to the synchronization with incomplete measurements in~\cite{MABB24}. 
For example, suppose each $A_{ij}$ is observed with probability $\xi$, then a similar argument along with the concentration inequality as shown above can be used to control the top and second smallest eigenvalues of $\BL.$ Using Theorem~\ref{thm:main1} will yield a sufficient condition on $\xi$ and $\sigma$ that ensures the benignness of the optimization landscape.}

\paragraph{Example: community detection under stochastic block model.} The observed adjacency matrix satisfies
\begin{equation}\label{def:sbm}
A_{ij} \sim 
\begin{cases}
\text{Ber}(p_{\rm in})-(p_{\rm in}+p_{\rm out})/2, & (i,j)\text{ in the same community}, \\
\text{Ber}(p_{\rm out})-(p_{\rm in}+p_{\rm out})/2, & (i,j)\text{ in different communities},
\end{cases} 
\end{equation}
where $p_{\rm in} > p_{\rm out}$ and Ber$(p)$ stands for Bernoulli random variable with mean $p$. 
Without loss of generality, we let the first $n/2$ members be the first community and the others be the second one, i.e., the ground truth membership is $\bx = [\bone_{n/2};-\bone_{n/2}]$. The community detection problem asks to recover the hidden membership from the adjacency matrix $\BA.$
The SDR~\eqref{def:sdr} recovers the membership exactly if the ``Laplacian" $\BL$ associated with $\BA$ and $\bx$ has a strictly positive second smallest eigenvalue.

Now we derive a sufficient condition to ensure the benign optimization landscape of the Burer-Monteiro factorization, which reduces to estimate the eigenvalues of $\BL.$
For the adjacency matrix $\BA$ drawn from the SBM, we have
\[
\E \BA =
\begin{bmatrix}
p_{\rm in}\BJ_{n/2} & p_{\rm out}\BJ_{n/2} \\
p_{\rm out}\BJ_{n/2} & p_{\rm in}\BJ_{n/2}
\end{bmatrix} - \frac{p_{\rm in} + p_{\rm out}}{2}\BJ_n = \frac{p_{\rm in} - p_{\rm out}}{2} \bx\bx^{\top}.
\]
As a result, we have
\[
\BL = \underbrace{\frac{p_{\rm in} - p_{\rm out}}{2}(n\I_n - \bx\bx^{\top})}_{\E \BL} + \underbrace{\left( \diag(\bx)\diag( (\BA-\E \BA)\bx ) -(\BA-\E\BA)\right)}_{\BL - \E \BL}.
\]

For the difference $\BL -\E \BL$, using the Bernstein inequality~\cite[Theorem 1.4]{T12} leads to
\[
\|\BL - \E \BL\| \leq C_0 (\sqrt{n(p_{\rm in} + p_{\rm out})\log n} + \log n) \leq C'_0 \sqrt{n(p_{\rm in} + p_{\rm out})\log n}
\]
following from~\cite[Lemma 17 and 18]{BBV16} and $p_{\rm in}\gtrsim n^{-1}\log n.$
Let $r = n(p_{\rm in} - p_{\rm out})/2$ be the operator of $\E\BL$ and then $\alpha$ is bounded by
\[
\alpha \leq \frac{ 2C_0'\sqrt{n(p_{\rm in}+p_{\rm out})\log n}}{n(p_{\rm in}-p_{\rm out})}.
\]
By using Corollary~\ref{cor:main1}, we write the discussion above into a corollary.
\begin{corollary}\label{cor:sbm}
Consider the stochastic block model with two communities~\eqref{def:sbm}. Suppose
\[
\frac{2C_0' \sqrt{n(p_{\rm in}+p_{\rm out})\log n}}{n(p_{\rm in}-p_{\rm out})} \leq \frac{p-3}{p+1},~~~p\geq 4,
\]
then the optimization landscape of the Burer-Monteiro factorization~\eqref{def:bm} is benign with high probability. Every local minimizer $\BS$ satisfies $\BS\BS^{\top} = \bx\bx^{\top}$. In particular, if $p_{\rm in} = an^{-1}\log n$ and $p_{\rm out} = bn^{-1}\log n$, then the condition becomes
\begin{equation}\label{cond:sbm}
\frac{a-b}{ \sqrt{a+b}} \geq 2C_0'\cdot  \frac{p+1}{p-3}.
\end{equation}
\end{corollary}
The exact recovery of community members from the SBM has been studied extensively in~\cite{A17,B18,ABH15,AFWZ20,WZS22}: the exact recovery is possible for spectral methods~\cite{AFWZ20}, convex relaxation~\cite{ABH15,B18}, and projected power method~\cite{WZS22} if and only if $\sqrt{a} - \sqrt{b} > \sqrt{2}$. Our bound~\eqref{cond:sbm} nearly matches the threshold for exact recovery (up to a constant) for $p\geq 4$. On the other hand, for $p=2$, it is shown in~\cite[Theorem 6]{BBV16} that 
\[
\frac{n(p_{\rm in} - p_{\rm out})}{\sqrt{2n(p_{\rm in} + p_{\rm out})}} \gtrsim n^{1/3} \Longleftrightarrow \frac{a-b}{\sqrt{a+b}} \gtrsim \frac{n^{1/3}}{\sqrt{\log n}}
\]
guarantees a benign landscape where the right hand side of the condition depends on $n^{1/3}$. 

In summary, we have provided a sufficient condition for a benign landscape for $p\geq 4$, with the noise level almost matching the information-theoretical bound. 
{After this work is posted, several works have been posted to improve Theorem~\ref{thm:main1}, especially~\cite{EW24,MABB24}. In particular,~\cite{MABB24} improves the bound on $\sigma$ to $\sigma\leq \frac{p-3}{p-1}\sqrt{\frac{n}{(2+\eps)\log n}}$ in the $\mathbb{Z}_2$-synchronization with Gaussian noise. For $p$ sufficiently large, the bound matches the information-theoretical bound in~\cite{B18}. Later,~\cite{EW24} improves $\lambda_{\max}(\BL)/\lambda_2(\BL) \leq (p-1)/2$ in Theorem~\ref{thm:main1} to $\lambda_{\max}(\BL)/\lambda_2(\BL) < p$, which is also proven optimal since a counterexample can be constructed such as a non-global local minimizer exists for some $\BL$ with condition number equal to $p$. In other words, a benign landscape also holds if $p = 2$ and 3 with the noise level almost close to the information-theoretical bound.}
%However, it remains unclear for $p = 2$ and $3$, where the state-of-the-art bound on $\sigma$ (also in the stochastic block model and high-dimensional Kuramoto model) is suboptimal. Therefore, we propose the following conjecture.
%\begin{conjecture}
%The optimization landscape~\eqref{def:bm} is benign for $p=2$ and $3$ with high probability in the following scenarios:
%\begin{itemize}
%\item for the high-dimensional Kuramoto model with the coupling matrix drawn from~\eqref{eq:kura_model}, it suffices to have
%\[
%\frac{1}{2} - \theta  \gtrsim \sqrt{\frac{\log n}{n}},~~~\theta < \frac{1}{2};
%\]

%\item for the $\mathbb{Z}_2$-synchronization under additive Gaussian noise~\eqref{def:z2}, it suffices to have
%\[
%\sigma \lesssim \sqrt{\frac{n}{\log n}};
%\]

%\item for the community detection under the stochastic block model~\eqref{def:sbm},  it suffices to have
%\[
%\frac{a-b}{\sqrt{a+b}} \geq C_0
%\]
%for some constant $C_0$ where $p_{\rm in} = an^{-1}\log n$ and $p_{\rm out} = bn^{-1}\log n$.
%\end{itemize}
%\end{conjecture}

\paragraph{Example: robustness against monotone adversaries}
Now we look into the robustness of the optimization landscape of~\eqref{def:bm} against monotone adversaries: does adding  ``favorable" samples to the original data create new local minimizers? 
We take the SBM as an example, and the argument also applies to other examples.
Suppose $\BA_{M}$ is sampled from the SBM in the form of~\eqref{def:sbm}. Theorem~\ref{thm:main1} implies $p\geq 2\lambda_{\max}(\BL_{M})/\lambda_2(\BL_{M}) + 1$ ensures a benign landscape where $\BL_{M}$ is defined in~\eqref{def:Ld1}. The membership vector $\bx = [\bone_{n/2};-\bone_{n/2}]$ is the ground truth and $\bx\bx^{\top}$ is the solution to the SDR~\eqref{def:sdr}. 

Consider the ``favorable" adversaries added to the SBM, i.e., we add edges between the nodes in the same community and delete those between different communities. The adversary matrix is in the following block structure:
\begin{equation}\label{def:adv}
\BDelta = 
\begin{bmatrix}
+ & - \\
- & +
\end{bmatrix}\in\RR^{n\times n}
\end{equation}
Here $``+"$ ($``-"$) means the entries in the corresponding $n/2\times n/2$ block is non-negative (non-positive). By construction, it satisfies
\[
\BA_{M} + \frac{p_{\rm in} + p_{\rm out}}{2}\BJ_n + \BDelta  \geq 0
\]
as the adversaries only delete existing edges between communities. 
The final data matrix is
\[
\BA :=\BA_{M} + \BDelta
\]
which is used to recover the hidden communities.
The corresponding Laplacian satisfies
\[
\BL = \BL_{M} + \BL_{\Delta} = (\BDG(\BA_{M}\bx\bx^{\top}) - \BA_{M}) + (\BDG(\BDelta\bx\bx^{\top}) - \BDelta)
\]
where $\BL_{M}$ and $\BL_{\Delta}$ represent the ``Laplacian" w.r.t. the original adjacency matrix and the adversary matrix. By construction, $\BL_{M}$ is positive semidefinite and so is $\BL_{\Delta}$ since $\diag(\bx)\BL_{\Delta}\diag(\bx)$ is exactly the graph Laplacian w.r.t. the nonnegative adjacency matrix $\diag(\bx)\BDelta\diag(\bx).$
Therefore, we have
\[
\BL \bx = 0,~~~\BL\succeq 0
\]
which guarantees $\bx\bx^{\top}$ is the global minimizer to the SDP relaxation. It is not clear immediately that $p \geq 2\lambda_{\max}(\BL_{M})/\lambda_2(\BL_{M}) + 1$ still suffices to ensure a benign landscape, because adding $\BL_{\Delta}$ to $\BL_{M}$ potentially changes the condition number.

Our next theorem shows that as long as the optimization landscape associated with the original data is benign, then adding monotone adversaries does not make the landscape worse: no new local minimizers will be created if $p$ satisfies $p \geq 2\lambda_{\max}(\BL_{M})/\lambda_2(\BL_{M}) + 1$. The theorem holds in more general scenarios besides the SBM. This robustness against monotone adversaries does not hold for $p=2$: as shown in~\cite{LXB19}, adding one more sample can lead to a new spurious local minimizer that is non-global.

\begin{theorem}[\bf Robustness against monotone adversaries]\label{thm:adv}
Suppose the optimization landscape of~\eqref{def:bm} is benign for the data matrix $\BA_{M}$, i.e., 
\[
p \geq \frac{2\lambda_{\max}(\BL_{M}) }{\lambda_2(\BL_{M})}+1,~~\BL_M = \BDG(\BA_{M}\bx\bx^{\top}) -\BA_{M}\succeq 0.
\]
Then for any monotone adversaries in the form of~\eqref{def:adv} with
\[
\BDG(\BDelta\bx\bx^{\top}) - \BDelta\succeq 0,~~\BDelta\circ \bx\bx^{\top}\geq 0,
\]
the resulting optimization landscape is still benign for the same $p$. The only local minimizer is given by $\BS\in(\mathbb{S}^{p-1})^{\otimes n}$ satisfying $\BS\BS^{\top} = \bx\bx^{\top}$, which is also global.
\end{theorem}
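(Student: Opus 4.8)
## Proof Proposal for Theorem~\ref{thm:adv}

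The plan is to reduce the robustness statement to Theorem~\ref{thm:main1} by showing that the perturbed certificate matrix $\BL = \BL_M + \BL_\Delta$ still satisfies the sufficient condition~\eqref{eq:thm1} with the \emph{same} $p$ that works for $\BL_M$. The key observation is that adding a monotone adversary in the form~\eqref{def:adv} changes $\BL$ in a very controlled way on the relevant subspace: since $\BL_M \bx = \BL_\Delta \bx = 0$, both matrices act on the orthogonal complement $\bx^\perp$, and there the added term $\BL_\Delta$ is positive semidefinite. So my first step is to record that $\lambda_2(\BL) = \lambda_{\min}(\BL|_{\bx^\perp}) \geq \lambda_{\min}(\BL_M|_{\bx^\perp}) = \lambda_2(\BL_M)$, simply by Weyl's inequality applied to the restriction to $\bx^\perp$: adding a PSD matrix only pushes the second smallest eigenvalue up. This takes care of the denominator in~\eqref{eq:thm1}.

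The subtle step is the numerator: $\lambda_{\max}(\BL) \leq \lambda_{\max}(\BL_M) + \lambda_{\max}(\BL_\Delta)$ is true but \emph{not} good enough, since $\lambda_{\max}(\BL_\Delta)$ could be large and would blow up the ratio. So I cannot just bound the spectrum crudely; I must instead go back into the proof of Theorem~\ref{thm:main1} and re-examine what role $\lambda_{\max}(\BL)$ actually plays. I expect the argument underlying Theorem~\ref{thm:main1} to show that every non-global second-order critical point $\BS$ admits a negative-curvature test direction whose associated quadratic form is bounded above by something like $-\lambda_2(\BL)\,\|\text{(off-diagonal part of }\BS\BS^\top - \bx\bx^\top)\|^2 + (\text{const})\cdot\langle \BL, \text{diagonal-ish correction}\rangle$, and the $\lambda_{\max}(\BL)$ enters only through an inequality of the shape $\langle \BL, \BZ\rangle \leq \lambda_{\max}(\BL)\,\tr(\BZ)$ for a PSD matrix $\BZ$ that is supported on $\bx^\perp$ (or is itself built from $\BS$ with $\BS^\top\bx$-related pieces projected out). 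The crucial structural fact I want to exploit is that for the adversary Laplacian $\BL_\Delta$, the sign pattern $\BDelta \circ \bx\bx^\top \geq 0$ means that $\diag(\bx)\BL_\Delta\diag(\bx)$ is a genuine graph Laplacian of a nonnegatively-weighted graph; consequently $\langle \BL_\Delta, \BS\BS^\top\rangle$ has a definite sign when evaluated against the ``fully synchronized'' configuration, and more importantly $\langle \BL_\Delta, \BZ\rangle \leq 0$ — or is otherwise harmless — for the specific $\BZ$ that appears in the negative-curvature computation, because that $\BZ$ is essentially $(\bone\bone^\top - \text{something}) \circ$ a PSD object aligned with $\bx\bx^\top$.

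Concretely, the main steps in order: (1) verify $\BL\succeq 0$ and $\BL\bx = 0$ (already done in the excerpt, so just cite it), so $\bx\bx^\top$ remains the SDR optimum; (2) show $\lambda_2(\BL) \geq \lambda_2(\BL_M) > 0$ via the restriction-to-$\bx^\perp$ plus PSD-monotonicity argument; (3) unpack the negative-curvature certificate from the proof of Theorem~\ref{thm:main1} at an arbitrary non-global second-order critical point $\BS$ of the $\BA$-problem, writing the curvature as $\langle \BL_M, \cdot\rangle + \langle \BL_\Delta, \cdot\rangle$ plus the terms that only see $\lambda_2$; (4) argue the $\langle \BL_\Delta, \cdot\rangle$ contribution is $\leq 0$ (or bounded by the $\lambda_2(\BL_\Delta) \geq 0$ part that only helps) using $\BDelta\circ\bx\bx^\top \geq 0$; (5) conclude the same test direction that certifies negative curvature for $\BL_M$ still certifies it for $\BL$, so the only second-order critical point is $\BS\BS^\top = \bx\bx^\top$. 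The main obstacle is step (4): I need the precise form of the test-direction quadratic form from Theorem~\ref{thm:main1}'s proof to check that $\BL_\Delta$ enters only through a term with a favorable sign; if instead it enters through an $\|\cdot\|$-type bound, I would need to strengthen the bookkeeping so that the adversary contributes via $\langle \BL_\Delta, \text{PSD}\rangle$ rather than $\lambda_{\max}(\BL_\Delta)\cdot(\cdot)$. I expect this to work precisely because the monotone-adversary hypothesis is tailored so that $\BL_\Delta$ is a Laplacian in the $\diag(\bx)$-conjugated frame, which is exactly the frame in which the benign-landscape argument is naturally phrased.
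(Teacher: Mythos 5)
Your strategy is the paper's strategy: decompose $\BL = \BL_M + \BL_\Delta$, run the Theorem~\ref{thm:main1} argument verbatim on the $\BL_M$ piece of the Hessian quadratic form at the same random test direction $\dot{\BS}_i = x_i(\BPhi - \BS_i\BPhi^\top\BS_i)$, and show that the $\BL_\Delta$ piece contributes with a favorable sign thanks to $\BDelta\circ\bx\bx^\top\geq 0$. You also correctly diagnose why the naive route $\lambda_{\max}(\BL)\leq \lambda_{\max}(\BL_M)+\lambda_{\max}(\BL_\Delta)$ fails. Your step (2) observing $\lambda_2(\BL)\geq\lambda_2(\BL_M)$ is true but unused in the paper; it never needs the condition number of $\BL$ at all, only that of $\BL_M$, precisely because it splits the quadratic form rather than re-applying Theorem~\ref{thm:main1} to $\BL$.

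The part you flag as the obstacle — step (4) — is indeed where the real work is, and your sketch does not close it. Your hope is that $\BL_\Delta$ enters through $\langle\BL_\Delta,\BZ\rangle$ for some PSD $\BZ$ so that $\BL_\Delta\succeq 0$ alone suffices. That is not what happens. The paper instead invokes the technique of McRae and Boumal \cite{MB23}: starting from $\langle\BL_\Delta,\E\dot{\BS}\dot{\BS}^\top\rangle = \sum_{i,j}L_{\Delta,ij}\,x_ix_j\,|\langle\BS_i,\BS_j\rangle|^2$, one uses the pointwise identity (valid because $\|\BS_i\|=1$ and $x_i\in\{\pm 1\}$)
\[
|\langle\BS_i,\BS_j\rangle|^2 = -1 + 2x_ix_j\langle\BS_i,\BS_j\rangle + \tfrac{1}{4}\|x_i\BS_i - x_j\BS_j\|^4
\]
to get, after subtracting the $\BDG(\BL_\Delta\BS\BS^\top)$ term (which contributes $(p-1)\langle\BL_\Delta,\BS\BS^\top\rangle$),
\[
\langle\BL_\Delta - \BDG(\BL_\Delta\BS\BS^\top),\E\dot{\BS}\dot{\BS}^\top\rangle = (3-p)\langle\BL_\Delta,\BS\BS^\top\rangle - \tfrac{1}{4}\sum_{i\neq j}x_ix_j\Delta_{ij}\,\|x_i\BS_i - x_j\BS_j\|^4 .
\]
Now \emph{both} hypotheses are needed and do separate jobs: $\BL_\Delta\succeq 0$ kills the first term for $p\geq 3$, while $x_ix_j\Delta_{ij}\geq 0$ kills the quartic term. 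Neither term is of the generic $\langle\BL_\Delta,\text{PSD}\rangle$ shape your sketch anticipates — the quartic piece is an entrywise-signed sum, not a matrix inner product against a PSD matrix, which is exactly why the $\BDelta\circ\bx\bx^\top\geq 0$ hypothesis cannot be weakened to something spectral. So the missing ingredient in your proposal is precisely this quartic identity and the resulting sign-split; without it, step (4) does not go through. Once that computation is supplied the rest of your outline (combine the two nonpositive pieces, conclude $\BS\BS^\top = \bx\bx^\top$ at any second-order critical point) is correct and matches the paper.
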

The assumption of the nonnegativity of $\diag(\bx)\BDelta\diag(\bx)\geq 0$ is crucial, as the proof of Theorem~\ref{thm:adv} essentially combines Theorem~\ref{thm:main1} with the result in~\cite{MB23} which requires the non-negativity. 

Theorem~\ref{thm:adv} can be also used in a different way to achieve a possibly better bound on $p$.  Given any data matrix $\BA$, and assume the SDP relaxation~\eqref{def:sdr} yields a global minimizer $\BX = \bx\bx^{\top}$ and $\BL = \BDG(\BA\bx\bx^{\top}) - \BA\succeq 0$. Now we are interested in how large $p$ is needed to guarantee a benign landscape of~\eqref{def:bm}? A simple answer is to use $p \geq 2\lambda_{\max}(\BL)/\lambda_2(\BL) + 1.$ Theorem~\ref{thm:adv} implies one can decrease $p$ if there exists a decomposition of $\BA$ into $\BA_M$ and $\BDelta$ such that $\BL_M = \BDG(\BA_M \bx\bx^{\top}) - \BA_M$ is positive semidefinite and better conditioned, and $\BDelta\circ \bx\bx^{\top}$ is nonnegative. Then we can simply set $p \geq 2\lambda_{\max}(\BL_M)/\lambda_2(\BL_M) + 1$ if the condition number of $\BL_M$ is smaller than that of $\BL.$ 

{ Based on the discussion above, we introduce another semi-random variant of the standard stochastic block model, inspired by~\cite{CG18}. For any pair of node $(i,j)$ in the same community, an edge $A_{ij}$ is sampled with probability $p_{ij}$ that is at least $p_{\rm in}$; and for any pair node $(i,j)$ in different communities, an edge $A_{ij}$ is sampled with probability $p_{ij}$ that is no greater than $p_{\rm out}.$ In other words, more edges are randomly added within the same communities and fewer edges are sampled between different ones. Now we will decompose $\BA$ into $\BA_M$ and $\BDelta$ such that Theorem~\ref{thm:adv} applies.
Let
\[
M_{ij}\sim 
\begin{cases}
\text{Ber}(p_{\rm in}/ p_{ij}), & (i,j)\text{ in the same community}, \\
\text{Ber}((p_{\rm out} - p_{ij})/(1-p_{ij})), & (i,j)\text{ in different communities},
\end{cases}
\]
be independent of $\BA$
and
\[
\BA_{M} = 
\begin{cases}
A_{ij} M_{ij}, & (i,j)\text{ in the same community}, \\
A_{ij} + (1-A_{ij}) M_{ij}, & (i,j)\text{ in different communities}.
\end{cases}
\]
Note that $\E A_{ij}M_{ij} = p_{ij}\cdot p_{\rm in}/ p_{ij}= p_{\rm in} $ for $(i,j)$ in the same community and 
\[
\E\left[ A_{ij} + (1-A_{ij}) M_{ij} \right]= p_{ij} + (1-p_{ij})\cdot \frac{p_{\rm out} - p_{ij}}{1- p_{ij}} = p_{\rm out}
\]
for $(i,j)$ in different communities.
Here
\[
\BDelta = \BA - \BA_{M} = 
\begin{cases}
A_{ij}(1- M_{ij}), & (i,j)\text{ in the same community}, \\
- (1-A_{ij}) M_{ij}, & (i,j)\text{ in different communities}.
\end{cases}
\]
In other words, conditioned on $\BA$ drawnly from the semi-random SBM, we can randomly delete edges in the same community and add edges between different communities such that $\BA_M$ is exactly sampled from the standard SBM with parameters $(n,p_{\rm in},p_{\rm out})$. The noise matrix $\BDelta$ obviously satisfies $\diag(\bx)\BDelta\diag(\bx)\geq 0$. In this case, as long as the Burer-Monteiro factorization associated with $\BA_M$ has a benign landscape, i.e., $(n,p_{\rm in},p_{\rm out},p)$ satisfies the assumption in Corollary~\ref{cor:sbm}, then so does the one corresponding to $\BA$. We write this discussion into the following corollary.
\begin{corollary}
Consider the semi-random variant of the stochastic block model, i.e., 
\[
A_{ij}\sim 
\begin{cases}
\text{Ber}(p_{ij}), & p_{ij} \geq p_{\rm in},~~(i,j)\text{ in the same community}, \\
\text{Ber}(p_{ij}), & p_{ij} \leq p_{\rm out},~~(i,j)\text{ in different communities}.
\end{cases}
\]
Suppose $p_{\rm in}$, $p_{\rm out}$, and $p$ satisfies the assumption in Corollary~\ref{cor:sbm}, then the optimization landscape of the Burer-Monteiro factorization corresponding to $\BA$ is benign, i.e., the only local minimizer is given by $\BS\in(\mathbb{S}^{p-1})^{\otimes n}$ satisfying $\BS\BS^{\top} = \bx\bx^{\top}$, which is also global.
\end{corollary}
}

\subsection{Theorem for $d\geq 2$}
For $d\geq 2$, the global minimizer of~\eqref{def:sdr} is shown to be rank-$d$ in the high SNR regime and thus equal to that of~\eqref{def:ls} in several applications. However, the solution is typically not equal to the ground truth~\cite{BBS17,ZB18,L22}, as shown in the examples such as angular synchronization~\cite{BBS17,ZB18}, orthogonal group synchronization~\cite{L22}, and generalized orthogonal Procrustes problem~\cite{L23b}.

From now on, we consider the convex relaxation~\eqref{def:sdr}, and assume $\widehat{\BO}$ satisfies~\eqref{def:cert} and $\widehat{\BO}\widehat{\BO}^{\top}$ is the unique global minimizer to~\eqref{def:sdr}. Our focus is on the optimization landscape of~\eqref{def:bm} that is used to solve~\eqref{def:sdr}. 
 
\begin{theorem}\label{thm:main2}
Suppose $\BX = \widehat{\BO}\widehat{\BO}^{\top}$ with $\widehat{\BO}\in\Od(d)^{\otimes n}$, and the certificate matrix $\BL$ satisfies
\[
\BL = \BDG(\BA\widehat{\BO}\widehat{\BO}^{\top}) - \BA, ~~~\BL\widehat{\BO} = 0,~~~\BL\succeq 0,~~~{\lambda_{d+1}(\BL) > 0},
\]
i.e., $\BX$ is the {unique} global minimizer to~\eqref{def:sdr}.
The optimization landscape of the Burer-Monteiro factorization~\eqref{def:bm} is benign if 
\[
p \geq \left(\frac{2\lambda_{\max}(\BL)}{\lambda_{d+1}(\BL)} - 1\right)d + 2.
\]
The only local minimizer $\BS$ to~\eqref{def:bm} is also global and satisfies $\BS\BS^{\top} = \widehat{\BO}\widehat{\BO}^{\top}.$
In particular, every point except the global minimizer has a negative curvature direction. 
\end{theorem}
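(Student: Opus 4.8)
The plan is to adapt the strategy that proves Theorem~\ref{thm:main1} (the $d=1$ case) to the matrix setting, replacing scalar spherical variables by Stiefel-valued blocks. Let $\BS\in\St(p,d)^{\otimes n}$ be an arbitrary second-order critical point of~\eqref{def:bm}. First I would write the Riemannian first-order condition: the projection of the Euclidean gradient $-\BA\BS$ onto the tangent space vanishes, which after symmetrizing the block multipliers yields a "data certificate at $\BS$", namely a block-diagonal symmetric matrix $\BLambda(\BS) = \BDG(\BA\BS\BS^{\top})$ such that $(\BLambda(\BS) - \BA)\BS = \bzero$. Set $\BG := \BLambda(\BS) - \BA$; this is the analogue of the Laplacian $\BL$ but evaluated at the critical point $\BS$ rather than at the global minimizer. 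By construction $f(\BS) = -\tfrac12\lag\BA,\BS\BS^{\top}\rag = -\tfrac12\lag\BLambda(\BS) - \BG,\BS\BS^{\top}\rag$, and using $\BS_i\BS_i^{\top} = \I_d$ one gets $f(\BS) = -\tfrac{d n}{2}\cdot(\text{something}) + \tfrac12\lag\BG,\BS\BS^{\top}\rag$ modulo a term depending only on the trace of $\BLambda(\BS)$; the point is that comparing $f(\BS)$ to the global value $f(\widehat{\BO})$ reduces to comparing $\lag\BG,\BS\BS^{\top}\rag$ against $\lag\BL,\BS\BS^{\top}\rag=0$.

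The second and central step is the second-order condition. For any tangent direction $\BPhi$ with blocks $\BPhi_i\in T_{\BS_i}(\St(p,d))$, the Riemannian Hessian quadratic form is $\Hess f(\BS)[\BPhi,\BPhi] = \lag\BG\BPhi,\BPhi\rag$ plus correction terms coming from the curvature of the Stiefel manifold, which evaluate to $\sum_i \lag\BLambda_i(\BS), \BPhi_i^{\top}\BPhi_i\rag$-type contributions. The key construction — exactly as in~\cite{L23} and in the $d=1$ argument — is to choose a \emph{structured} family of test directions built from the rows of an auxiliary matrix. Concretely, since $p > d\cdot(\text{rank considerations})$, the columns of $\BS^{\top}\in\RR^{p\times nd}$ span a subspace of dimension at most $nd$, but more importantly $\widehat{\BO}$ lives in $\RR^{nd\times d}$; I would look for a rotation/embedding $\BQ\in\St(p,d)$ and feed in directions of the form $\BPhi_i = (\I_p - \BS_i^{\top}\BS_i)\BZ\BS_i$-style perturbations that "rotate $\BS_i$ toward $\widehat{\BO}_i\BQ^{\top}$". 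Averaging the Hessian inequality over an appropriate set of such directions (or over a Haar measure on a Grassmannian of complementary subspaces) kills the cross terms and produces a clean inequality of the shape
\[
0 \leq \frac{1}{\text{(normalizing count)}}\sum \Hess f(\BS)[\BPhi,\BPhi] \leq -c_1\,\lambda_{d+1}(\BL)\cdot\|\BS\BS^{\top} - \widehat{\BO}\widehat{\BO}^{\top}\|_{\text{something}} + c_2\,\lambda_{\max}(\BL)\cdot(\text{defect}),
\]
where the number of available independent directions is governed by $p - d$ (this is where $p$ enters) and the spectral gap $\lambda_{d+1}(\BL)$ controls the "restoring force" because $\widehat{\BO}$ spans the $d$-dimensional kernel of $\BL$ so the relevant eigenvalue is the $(d+1)$-st. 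Choosing constants so that the bound reads $(p-2) \geq \bigl(2\lambda_{\max}(\BL)/\lambda_{d+1}(\BL) - 1\bigr)d$ forces the defect term to vanish, i.e. $\BG\BS = \bzero$ together with $\BS\BS^{\top}$ lying in the kernel structure of $\BL$, which in turn pins down $\BS\BS^{\top} = \widehat{\BO}\widehat{\BO}^{\top}$.

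The third step is to upgrade "$\BS$ is second-order critical $\Rightarrow \BS\BS^{\top} = \widehat{\BO}\widehat{\BO}^{\top}$" to the full statement that every non-global point has a strict negative curvature direction. Here I would argue that if $\BS$ is critical but $\BS\BS^{\top}\neq\widehat{\BO}\widehat{\BO}^{\top}$, the averaged inequality above is \emph{strict}, exhibiting at least one $\BPhi$ with $\Hess f(\BS)[\BPhi,\BPhi] < 0$; a limiting/compactness argument handles the boundary of strictness. Finally, one checks that $\BS\BS^{\top} = \widehat{\BO}\widehat{\BO}^{\top}$ does give a genuine local (hence global) minimizer and that all such points are equivalent up to right multiplication by $\St(p,d)$, so the landscape is benign in the stated sense.

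I expect the main obstacle to be step two: constructing the right averaged family of tangent directions on the \emph{product of Stiefel manifolds} and correctly accounting for the manifold-curvature terms in the Riemannian Hessian. In the $d=1$ case the sphere curvature terms are scalar and transparent; for $d\geq 2$ one must carefully track the block structure $\BPhi_i^{\top}\BPhi_i$ versus $\BPhi_i\BPhi_i^{\top}$ and ensure the counting argument delivers the factor $d$ in front and the $+2$ constant rather than, say, $+1$ or $+d$. Getting these constants sharp — so that the threshold degrades gracefully to Theorem~\ref{thm:main1} when $d=1$ — is the delicate part; everything else is a matter of bookkeeping with the spectral decomposition $\BL = \sum_{k>d}\lambda_k \bv_k\bv_k^{\top}$ and the identity $\BL\widehat{\BO}=\bzero$.
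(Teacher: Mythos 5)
You have the right high-level strategy --- test the second-order necessary condition with an averaged family of tangent directions that couple $\BS$ to $\widehat{\BO}$ and invoke the spectral gap $\lambda_{d+1}(\BL)$ --- but the proposal as written has genuine gaps that prevent it from being carried out, and you correctly anticipated where they are.

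First, the test direction is not pinned down and, as sketched, is wrong. Your candidate $\BPhi_i = (\I_p - \BS_i^{\top}\BS_i)\BZ\BS_i$ is dimensionally off (a tangent vector at $\BS_i$ must be $d\times p$), and more importantly it does not couple to $\widehat{\BO}_i$ in the way the Hessian requires. The direction that makes the argument work is
$\dot{\BS}_i = \widehat{\BO}_i\BPhi - \BS_i\BPhi^{\top}\widehat{\BO}_i^{\top}\BS_i$
with $\BPhi\in\RR^{d\times p}$ an i.i.d.\ Gaussian matrix; the averaging is over this single Gaussian $\BPhi$ (using $\E\bigl[\BPhi\BU^{\top}\BV\BPhi^{\top}\bigr] = \lag\BU,\BV\rag\I_d$ and $\E\bigl[\BPhi\BU^{\top}\BPhi\BV^{\top}\bigr] = \BU\BV^{\top}$), not over a Haar measure on a Grassmannian. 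This expectation is precisely what produces the clean decomposition $\E\dot{\BS}_i\dot{\BS}_j^{\top} = (p-2)\widehat{\BO}_i\widehat{\BO}_j^{\top} + X_{ij}\BS_i\BS_j^{\top}$ with $X_{ij} = \lag\widehat{\BO}_i^{\top}\BS_i,\widehat{\BO}_j^{\top}\BS_j\rag$.

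Second, you omit the crucial identity that rewrites the Hessian certificate at an arbitrary critical point $\BS$ in terms of the fixed $\BL$ at the global optimizer: $\BDG(\BA\BS\BS^{\top}) - \BA = \BL - \BDG(\BL\BS\BS^{\top})$ (using $\BDG(\widehat{\BLambda}\BS\BS^{\top}) = \widehat{\BLambda}$). Without this, your matrix $\BG$ depends on $\BS$ and you have no handle on its spectrum. Third, your inequality has a ``restoring force'' term and a ``defect'' term depending on apparently different quantities, which would not cancel; the argument only closes because both $\lag\BL,\E\dot{\BS}\dot{\BS}^{\top}\rag$ and $\lag\BDG(\BL\BS\BS^{\top}),\E\dot{\BS}\dot{\BS}^{\top}\rag$ factor through the \emph{same} scalar $nd - n^{-1}\lag\BX,\BJ_n\rag$, via the elementary but essential algebraic inequality $nd^2 - n^{-1}\|\BX\|_F^2 \leq 2d\bigl(nd - n^{-1}\lag\BX,\BJ_n\rag\bigr)$ for PSD $\BX$ with $X_{ii}=d$. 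That inequality is exactly what fixes the factor $d$ and the threshold $p + d - 2 \geq 2d\lambda_{\max}(\BL)/\lambda_{d+1}(\BL)$, i.e.\ the constants you flagged as delicate. Finally, no limiting or compactness argument is needed in step three: once the averaged quadratic form is shown to be $\leq 0$ at any SOCP while the second-order condition forces it to be $\geq 0$, equality holds, and equality in the chain (in particular in Cauchy--Schwarz for $\BX$) forces $\BX = d\BJ_n$ and hence $\BS\BS^{\top} = \widehat{\BO}\widehat{\BO}^{\top}$ directly.
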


{We make a comparison between Theorem~\ref{thm:main1} and~\ref{thm:main2}. The result above  generalizes Theorem~\ref{thm:main1} from $d=1$ to $d\geq 2$. For $d=1$, we know that the global minimizer satisfies $\widehat{\BO} \in\{\pm 1\}^n$ if the tightness holds, and thus Theorem~\ref{thm:main2} implies Theorem~\ref{thm:main1}. 
Regarding their assumptions, we note that for $d= 1$, the certificate matrix in~\eqref{def:Ld1} immediately satisfies $\BL\bx = 0$, and $\lambda_2(\BL)>0$ suffices to guarantee the tightness of the SDR. However, for $d\geq 2$, $\lambda_{d+1}(\BL) > 0$ alone does not guarantee the exactness of the SDR and the global optimality of $\widehat{\BO}\widehat{\BO}^{\top}$. In addition, $\BL\widehat{\BO} = 0$ is required in Theorem~\ref{thm:main2} because it does not hold by its definition. In fact, the condition $\BL\widehat{\BO} = 0$ assumes that $\widehat{\BO}$ is a first-order critical point in~\eqref{def:bm}.}

Similar to Theorem~\ref{thm:main1}, Theorem~\ref{thm:main2} is quite general as the condition is purely algebraic: it says that the local geometry at the global minimizer governs the global optimization landscape. To use Theorem~\ref{thm:main2}, it suffices to estimate the top and $(d+1)$-th smallest eigenvalues of $\BL = \BDG(\BA\widehat{\BO}\widehat{\BO}^{\top}) - \BA$. This task is typically harder for $d\geq 2$ than $d=1$ as the global minimizer does not have a closed form solution for $d\geq 2$.
Now we proceed to discuss the two applications of Theorem~\ref{thm:main2} in which the condition number of $\BL$ is well understood.

\paragraph{Example: $\Od(d)$-synchronization under additive Gaussian noise.} 
We consider the orthogonal group synchronization under additive Gaussian noise, as introduced in~\eqref{eq:odsync}. In~\cite{L22}, it is shown that the SDR~\cite{L22} is tight with high probability under $\sigma\lesssim \sqrt{n}/(\sqrt{d}(\sqrt{d} +\sqrt{\log n}))$. In fact,~\cite{L22} proved that the generalized power method (a nonconvex iterative algorithm for $\Od(d)$-synchronization) converges globally to the SDR solution. Despite the analysis requires spectral initialization, the global convergence still holds empirically even if we randomly initialize the algorithm. To understand this, it is natural to look into the optimization landscape of Burer-Monteiro factorization~\eqref{def:bm} for the $\Od(d)$-synchronization. 
As the tightness of the SDR is guaranteed in~\cite{L22}, it suffices to estimate the top and $(d+1)$-th smallest eigenvalues of $\BL = \BDG(\BA\widehat{\BO}\widehat{\BO}^{\top}) - \BA$ via using the by-products in~\cite{L22}, and then apply Theorem~\ref{thm:main2}. The result is summarized as follows, and the proof is deferred to Section~\ref{ss:cor}.

\begin{corollary}[\bf Landscape of orthogonal group synchronization]\label{cor:odsync}

For the orthogonal group synchronization under additive Gaussian noise~\eqref{eq:odsync}, 
the optimization landscape of~\eqref{def:bm} is benign with high probability provided that
\[
\sigma  \leq \frac{p-d-2}{p+3d-2}\cdot\frac{\sqrt{n}}{C \sqrt{d}(\sqrt{d} + 4\sqrt{\log n})}
\]
for some positive constant $C>0.$
\end{corollary}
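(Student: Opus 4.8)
The plan is to derive the corollary directly from Theorem~\ref{thm:main2} applied to the certificate matrix $\BL = \BDG(\BA\widehat{\BO}\widehat{\BO}^{\top}) - \BA$ at the rank-$d$ global minimizer $\widehat{\BO}\widehat{\BO}^{\top}$ of the SDR~\eqref{def:sdr}. Once the standing hypotheses of Theorem~\ref{thm:main2} are verified, the entire problem collapses to estimating the ratio $\lambda_{\max}(\BL)/\lambda_{d+1}(\BL)$, so the real content is a pair of eigenvalue bounds on $\BL$. First I would invoke~\cite{L22}: under $\sigma \lesssim \sqrt{n}/(\sqrt{d}(\sqrt{d}+\sqrt{\log n}))$, with high probability the SDR~\eqref{def:sdr} is tight, its unique minimizer $\widehat{\BO}\widehat{\BO}^{\top}$ has rank $d$, and $\BL\succeq 0$ with $\BL\widehat{\BO}=0$, which is exactly what Theorem~\ref{thm:main2} assumes. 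Beyond tightness, I would extract from the leave-one-out analysis of~\cite{L22} the quantitative by-products: that $\widehat{\BO}$ coincides with the ground truth $\BO$ up to a global orthogonal factor with a controlled error, together with the concentration estimates $\|\BW\| \lesssim \sqrt{nd}$ and $\max_i \|\sum_k \BW_{ik}\BO_k\| \lesssim \sqrt{n}\,(\sqrt{d}+\sqrt{\log n})$, the $\sqrt{\log n}$ coming from a union bound over the $n$ diagonal blocks.

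The key algebraic observation is that for $\BA = \BO\BO^{\top}+\sigma\BW$ one has $\BDG((\BO\BO^{\top})^2) = n\I_{nd}$ and $\BDG(\BO\BO^{\top}) = \I_{nd}$, so if $\widehat{\BO}$ were exactly $\BO$ then $\BL = n(\I_{nd}-n^{-1}\BO\BO^{\top}) + \sigma(\BDG(\BW\BO\BO^{\top})-\BW)$, whose distance from the population matrix $n(\I_{nd}-n^{-1}\BO\BO^{\top})$ is at most $\sigma\|\BDG(\BW\BO\BO^{\top})-\BW\|$, controlled by the concentration bounds above. The genuine certificate, however, is evaluated at $\widehat{\BO}\neq\BO$, and propagating the estimate $\widehat{\BO}\approx\BO$ through the two copies of $\widehat{\BO}\widehat{\BO}^{\top}$ inside $\BDG(\BA\widehat{\BO}\widehat{\BO}^{\top}) - \BA$ — while keeping the error of order $\sigma\sqrt{d}(\sqrt{d}+\sqrt{\log n})\sqrt{n}$, the extra $\sqrt{d}$ relative to the $\widehat{\BO}=\BO$ case coming precisely from this step — is the main obstacle, and it is exactly where the fine-grained bounds of~\cite{L22} are indispensable. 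The target estimate is
\[
\big\| \BL - n\big(\I_{nd} - n^{-1}\widehat{\BO}\widehat{\BO}^{\top}\big) \big\| \le \delta n, \qquad \delta := \frac{C\sigma\sqrt{d}\,(\sqrt{d}+4\sqrt{\log n})}{\sqrt{n}},
\]
for a universal constant $C>0$, valid with high probability.

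Finally, since $n(\I_{nd}-n^{-1}\widehat{\BO}\widehat{\BO}^{\top})$ has eigenvalue $0$ of multiplicity $d$ and $n$ of multiplicity $(n-1)d$, Weyl's inequality together with the displayed bound — and with the exact identity $\BL\widehat{\BO}=0$, which pins the bottom $d$ eigenvalues of $\BL$ to zero — gives $\lambda_{d+1}(\BL) \ge (1-\delta)n$ and $\lambda_{\max}(\BL) \le (1+\delta)n$, hence $\lambda_{\max}(\BL)/\lambda_{d+1}(\BL) \le (1+\delta)/(1-\delta)$. Substituting into the condition of Theorem~\ref{thm:main2} shows a benign landscape holds once $p \ge \frac{1+3\delta}{1-\delta}\,d + 2$, which rearranges to $\delta \le \frac{p-d-2}{p+3d-2}$; unwinding the definition of $\delta$ yields exactly the stated bound on $\sigma$. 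A final check that the several high-probability events from~\cite{L22} and the concentration estimates hold simultaneously completes the proof.
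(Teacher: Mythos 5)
Your proposal follows essentially the same route as the paper: invoke Theorem~\ref{thm:main2}, import the tightness result and quantitative by-products from~\cite{L22} (the bound $\min_{\BQ}\|\widehat{\BO}-\BO\BQ\|_F\leq\eps\sqrt{nd}$ and the concentration estimates on $\|\BW\|$ and $\max_i\|\BW_i^{\top}\widehat{\BO}\|$), bound $\lambda_{\max}(\BL)/\lambda_{d+1}(\BL)\leq(1+\delta)/(1-\delta)$ with $\delta=C\sigma\sqrt{d}(\sqrt{d}+4\sqrt{\log n})/\sqrt{n}$, and rearrange. The only organizational difference is that you package the eigenvalue control as a single operator-norm perturbation $\|\BL-n(\I_{nd}-n^{-1}\widehat{\BO}\widehat{\BO}^{\top})\|\leq\delta n$ followed by Weyl, whereas the paper writes $\BL=(\I_{nd}-n^{-1}\widehat{\BO}\widehat{\BO}^{\top})(\BLambda-\BA)(\I_{nd}-n^{-1}\widehat{\BO}\widehat{\BO}^{\top})$ and bounds $\lambda_{\max}(\BLambda)$, $\lambda_{\min}(\BLambda)$, and the projected $\BA$ separately; because $\BL\widehat{\BO}=0$ pins the kernel, the two formulations are equivalent. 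Your proposal correctly identifies that the dominant error term comes from propagating $\widehat{\BO}\approx\BO$ through $\BLambda_{ii}=\BO_i(\BO^{\top}\widehat{\BO})\widehat{\BO}_i^{\top}+\sigma\BW_i^{\top}\widehat{\BO}\widehat{\BO}_i^{\top}$ and from the $\eps^2 nd$-sized residual of the projected signal, both of which are absorbed into $\delta n$ via the relation $\eps\sqrt{nd}\lesssim\sigma\sqrt{d}(\sqrt{d}+4\sqrt{\log n})$; this is exactly what the paper does, so there is no genuine gap.
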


Corollary~\ref{cor:odsync} implies that if $p\geq 2d+2$, then
\[
\frac{p-d-2}{p+3d-2} \geq \frac{1}{5}
\]
which implies that the optimization landscape is benign with high probability for $\sigma\lesssim \sqrt{n}/(\sqrt{d}(\sqrt{d} + 4\sqrt{\log n}))$. In other words, for $p\geq 2d +2$, the Burer-Monteiro factorization does not suffer from any spurious local minimizers under almost identical assumption that ensures the tightness of the SDR.

\paragraph{Example: Generalized orthogonal Procrustes problem.}

Consider $\BA_i$ is a noisy copy of a point cloud $\bar{\BA}$ transformed by an unknown orthogonal matrix $\BO_i$ and additive noise, i.e.,
\begin{equation}\label{eq:gopp}
\BA_i = \BO_i \bar{\BA} + \sigma \BW_i,~~1\leq i\leq n,
\end{equation}
where $\bar{\BA}\in\RR^{d\times m}$ is a point cloud consisting of $m$ points in $\RR^d$. The generalized orthogonal Procrustes problem is to align $n$ point clouds by estimating $\bar{\BA}$ and $\{\BO_i\}_{i=1}^n$ from $\{\BA_i\}_{i=1}^n.$ It has found many applications in statistics~\cite{G75}, computer vision~\cite{MDKK16,MGPG04}, and imaging science~\cite{BKS14,CKS15,S18,SS11}.

The least squares estimation of $\BO_i$ and $\bar{\BA}$ is equivalent to 
\begin{equation}\label{def:ls_gopp}
\min_{\BR\in\Od(d)^{\otimes n}}~-\sum_{i<j}\lag \BA_i\BA_j^{\top}, \BR_i\BR_j^{\top} \rag.
\end{equation}
Suppose $\widehat{\BO}_i$ is the global minimizer to~\eqref{def:ls_gopp}, then the estimation of $\bar{\BA}$ is given by
$\widehat{\bar{\BA}} = n^{-1} \sum_{i=1}^n \widehat{\BO}_i^{\top}\BA_i$.
 We can see that~\eqref{def:ls_gopp} is identical to~\eqref{def:ls} by setting $\BA_{ij} = \BA_i\BA_j^{\top}$, and it seems NP-hard to find the global minimizer to~\eqref{def:ls_gopp}.
The work~\cite{BKS14} by Bandeira, Khoo, and Singer conjectured that if the noise level $\sigma$ is below certain threshold $\sigma^* > 0$,  then the SDR is tight, i.e., it recovers the global minimizer to~\eqref{def:ls_gopp}. 

The tightness of the SDR is answered in~\cite{L23b}: the generalized power method with a proper initialization converges to the global minimizer to the SDR in the form of $\widehat{\BO}\widehat{\BO}^{\top}$ for some $\widehat{\BO}\in\Od(d)^{\otimes n}$ with probability at least $1-O(n^{-\gamma+2})$ if
\[
\sigma \lesssim \frac{\sigma_{\min}(\bar{\BA})}{\kappa^4} \cdot \frac{\sqrt{n}}{\sqrt{d}(\sqrt{nd} + \sqrt{m} + \sqrt{2\gamma n\log n})},~~\kappa = \frac{\sigma_{\max}(\bar{\BA})}{\sigma_{\min}(\bar{\BA})}.
\]
Moreover, the global minimizer $\widehat{\BO}$ satisfies $\BL = \BDG(\BA\widehat{\BO}\widehat{\BO}^{\top}) - \BA\succeq 0$ and $\BL\widehat{\BO}=0.$
Similar to the orthogonal group synchronization, the convergence analysis of generalized power method requires a spectral initialization. However, the generalized power method is empirically successful with random initialization, which leads to a conjecture on the benign optimization landscape of~\eqref{def:bm}. Our next corollary is a direct consequence of Theorem~\ref{thm:main2} by a careful estimation of the condition number of $\BL$ with the results in~\cite{L23b}. 

\begin{corollary}[\bf Optimization landscape of general orthogonal Procrustes problem]\label{cor:gopp}
Consider the generalized orthogonal Procrustes problem with additive Gaussian noise~\eqref{eq:gopp}, the optimization landscape of~\eqref{def:bm} is benign with probability at least $1- O(n^{-\gamma+2})$ provided that 
\[
\sigma \leq \frac{p + d - 2\kappa^2 d- 2 }{p + d + 2\kappa^2 d - 2} \cdot \frac{\sqrt{n}\|\bar{\BA}\|}{C\kappa^4 \sqrt{d} (\sqrt{nd} + \sqrt{m} + \sqrt{ 2\gamma n\log n})}
\]
for some constant $C>0.$
\end{corollary}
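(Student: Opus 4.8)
The plan is to deduce Corollary~\ref{cor:gopp} from Theorem~\ref{thm:main2} applied with $\BA_{ij} = \BA_i\BA_j^{\top}$, which reduces everything to controlling the condition number $\lambda_{\max}(\BL)/\lambda_{d+1}(\BL)$ of the certificate matrix $\BL = \BDG(\BA\widehat\BO\widehat\BO^{\top}) - \BA$, where $\widehat\BO$ is the rank-$d$ SDR minimizer furnished by~\cite{L23b}. I would first compute the population certificate: when $\sigma = 0$ one has $\BA_{ij} = \BO_i\bar\BA\bar\BA^{\top}\BO_j^{\top}$ and $\widehat\BO\widehat\BO^{\top} = \BO\BO^{\top}$, and a direct block computation gives
\[
\BL_0 \;=\; \blkdiag(\BO_1,\dots,\BO_n)\,\big[(n\I_n - \BJ_n)\otimes\bar\BA\bar\BA^{\top}\big]\,\blkdiag(\BO_1,\dots,\BO_n)^{\top}.
\]
Since $\blkdiag(\BO_1,\dots,\BO_n)$ is orthogonal and the eigenvalues of $n\I_n - \BJ_n$ are $0$ (simple) and $n$ (multiplicity $n-1$), the spectrum of $\BL_0$ is $0$ with multiplicity $d$ together with $\{n\sigma_i^2(\bar\BA)\}_{i=1}^d$ each of multiplicity $n-1$; hence $\lambda_{d+1}(\BL_0) = n\sigma_{\min}^2(\bar\BA)$, $\lambda_{\max}(\BL_0) = n\sigma_{\max}^2(\bar\BA) = n\|\bar\BA\|^2$, and the noiseless condition number is exactly $\kappa^2$. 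Plugged into Theorem~\ref{thm:main2} this already yields the threshold $p \ge (2\kappa^2 - 1)d + 2$, the $\sigma\to 0$ limit of the claimed bound.

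Next I would transfer the noise estimates from~\cite{L23b}. On the event of probability $1-O(n^{-\gamma+2})$ on which the generalized power method converges, one has $\BL\succeq 0$ and $\BL\widehat\BO = 0$ (so $\lambda_1(\BL)=\cdots=\lambda_d(\BL)=0 < \lambda_{d+1}(\BL)$), together with an $\ell_2$-alignment bound for $\min_{\BQ\in\Od(d)}\|\widehat\BO - \BO\BQ\|$ and operator-norm control of the fluctuation $\BL - \BL_0$, all governed by the quantity $\sigma\sqrt d(\sqrt{nd}+\sqrt m+\sqrt{2\gamma n\log n})$ arising from the sub-Gaussian terms $\bar\BA\BW_j^{\top}$ and $\BW_i\BW_j^{\top}$ in $\BA - \BA^{(0)}$ (with $\BA^{(0)}_{ij} = \BO_i\bar\BA\bar\BA^{\top}\BO_j^{\top}$). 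Decomposing
\[
\BL - \BL_0 = \BDG\!\big((\BA-\BA^{(0)})\widehat\BO\widehat\BO^{\top} + \BA^{(0)}(\widehat\BO\widehat\BO^{\top} - \BO\BO^{\top})\big) - (\BA - \BA^{(0)})
\]
and bounding each term (the middle one using the alignment estimate), one arrives via Weyl's inequality at a multiplicative control
\[
\lambda_{d+1}(\BL) \ge (1-\delta)\,n\sigma_{\min}^2(\bar\BA), \qquad \lambda_{\max}(\BL) \le (1+\delta)\,n\|\bar\BA\|^2,
\]
with $\delta \lesssim \sigma\kappa^4\sqrt d(\sqrt{nd}+\sqrt m+\sqrt{2\gamma n\log n})/(\sqrt n\,\|\bar\BA\|)$, the $\kappa^4$ being inherited from the certificate analysis of~\cite{L23b}, which measures the perturbation against the worst-conditioned direction.

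Finally I would feed this into Theorem~\ref{thm:main2}. The two displayed bounds give $\lambda_{\max}(\BL)/\lambda_{d+1}(\BL) \le \kappa^2(1+\delta)/(1-\delta)$, and the hypothesis of Theorem~\ref{thm:main2}, namely $\lambda_{\max}(\BL)/\lambda_{d+1}(\BL) \le (p+d-2)/(2d)$, is therefore implied once
\[
\kappa^2\,\frac{1+\delta}{1-\delta} \le \frac{p+d-2}{2d} \quad\Longleftrightarrow\quad \delta \le \frac{p + d - 2\kappa^2 d - 2}{p + d + 2\kappa^2 d - 2};
\]
substituting the bound on $\delta$ and solving for $\sigma$ recovers exactly the condition stated in the corollary, and the conclusion of Theorem~\ref{thm:main2} (the unique local minimizer $\BS$ is global, $\BS\BS^{\top} = \widehat\BO\widehat\BO^{\top}$) transfers verbatim. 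I expect the main obstacle to be assembling the fluctuation bound on $\BL - \BL_0$ with the sharp $\sqrt{nd}+\sqrt m+\sqrt{2\gamma n\log n}$ dependence and the correct powers of $\kappa$ --- in particular, verifying that the $\ell_2$-alignment term $\BA^{(0)}(\widehat\BO\widehat\BO^{\top}-\BO\BO^{\top})$ does not degrade $\lambda_{d+1}(\BL)$ below the claimed level --- though most of the required high-probability estimates are already available in~\cite{L23b}, so the remaining work is careful bookkeeping to land precisely on the prefactor $\tfrac{p+d-2\kappa^2 d-2}{p+d+2\kappa^2 d-2}$; note that the positivity requirement $p+d-2 > 2\kappa^2 d$ is exactly what makes this prefactor (hence the whole statement) meaningful.
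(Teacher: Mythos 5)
Your proposal is correct and follows essentially the same route as the paper: invoke Theorem~\ref{thm:main2}, transfer the tightness and alignment estimates from~\cite{L23b}, bound $\lambda_{\max}(\BL)/\lambda_{d+1}(\BL)$, and solve the resulting inequality for $\sigma$ --- the algebra $\kappa^2(1+\delta)/(1-\delta)\le(p+d-2)/(2d)\Leftrightarrow\delta\le\tfrac{p+d-2\kappa^2 d-2}{p+d+2\kappa^2 d-2}$ is exactly what the paper does. The one genuine addition in your write-up is the explicit diagonalization of the noiseless certificate $\BL_0 = \blkdiag(\BO_i)\bigl[(n\I_n-\BJ_n)\otimes\bar\BA\bar\BA^{\top}\bigr]\blkdiag(\BO_i)^{\top}$, identifying $\lambda_{d+1}(\BL_0)=n\sigma_{\min}^2(\bar\BA)$ and $\lambda_{\max}(\BL_0)=n\|\bar\BA\|^2$ so the baseline condition number is exactly $\kappa^2$; the paper reaches the same numbers less transparently by quoting $\lambda_{\min}(\BLambda)\ge\tfrac{1-\eps^2 d}{\kappa^2}n\|\bar\BA\|^2$ from~\cite{L23b} and bounding the projected $\BA$. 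Be aware that where you write a single symmetric factor $(1\pm\delta)$, the paper's two errors actually scale as $\kappa^2\delta'$ on $\lambda_{\max}$ (relative to $n\|\bar\BA\|^2$) and $\kappa^4\delta'$ on $\lambda_{d+1}$ (relative to $n\|\bar\BA\|^2/\kappa^2$), so your uniform $\delta$ must be taken as $C\kappa^4\delta'$ to cover both --- which you do state, and which is what produces the $\kappa^4$ in the final denominator. The fluctuation bookkeeping you flag as the ``main obstacle'' is indeed where the paper spends its effort, splitting $\BA = \BO\bar\BA\bar\BA^{\top}\BO^{\top}+\sigma\BDelta$, bounding $\|\BDelta\|\le 3\sqrt{n}\|\bar\BA\|\|\BW\|$ and $\|(\I_{nd}-n^{-1}\widehat\BO\widehat\BO^{\top})\BO\bar\BA\|_F^2\le\|\bar\BA\|^2\|\widehat\BO-\BO\BQ\|_F^2$, but these are all imports from~\cite{L23b} as you anticipate, so no gap.
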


Let $p\geq 2\kappa^2 d+2$, then we have
\[
\frac{p + d - 2\kappa^2 d- 2 }{p + d + 2\kappa^2 d - 2} \geq \frac{1}{4\kappa^2 + 1}.
\]
Therefore, if $\sigma \lesssim \sqrt{n}\|\bar{\BA}\|/ (\kappa^6 \sqrt{d} (\sqrt{nd} + \sqrt{m} + 4\sqrt{ n\log n}))$, then the optimization landscape is benign with high probability. Corollary~\ref{cor:gopp} provides a positive answer to the benign landscape under the noise level $\sigma$ that almost matches the sufficient condition for the tightness of the SDR.

{Finally, we conclude our work by highlighting its main contributions. The major contribution of this work is a deterministic condition that guarantees the benign landscape of the Burer-Monteiro factorization arising from group synchronization. 
Notably, the condition number of the Laplacian matrix at the global minimizer plays a pivotal role in characterizing the global landscape—a surprising finding that has inspired a series of subsequent works~\cite{MABB24,EW24}.
Our results are versatile and apply to various statistical models, achieving near-optimal conditions in a broad range of examples such as $\mathbb{Z}_2$-synchronization, community detection, $\Od(d)$-synchronization, and the generalized Procrustes problem. Moreover, we demonstrate that the Burer-Monteiro factorization is robust to monotone adversaries, particularly in community detection and graph clustering.

Several future directions remain to be explored. While our results affirm the benignness of the optimization landscape, a comprehensive understanding of how optimization algorithms, such as Riemannian gradient descent or the projected power method, converge to the global minimizer from generic random initialization is still largely unknown. An even more intriguing question is whether the phenomenon that "local information determines the global landscape" extends to other low-rank matrix recovery problems, including matrix completion and phase retrieval. If so, what is the appropriate quantity that governs the global optimization landscape? Conversely, if not, what are the counterexamples? We leave these questions for future work.}

\section{Proofs}

We will first provide a proof of Theorem~\ref{thm:main2}; and then proceed to use it to justify Corollary~\ref{cor:odsync} and~\ref{cor:gopp}. In fact, Theorem~\ref{thm:main1} for $d=1$ is a special case of Theorem~\ref{thm:main2}. The idea of proof follows from considering the second order necessary condition of~\eqref{def:bm} and then choosing a particular random direction~\cite{MMMO17,MTG20,MB23}. The goal is to show that the quadratic form of the Riemannian Hessian at every point  is non-positive except at the global minimizer, and this implies the only local minimizer is also global. The key technical part is to control the quadratic form of the Hessian via the eigenvalues of $\BL$ at the global minimizer.

We quickly review the Riemannian gradient and Hessian of~\eqref{def:bm} on the Stiefel manifold. {The derivations are standard and can be found in~\cite{B15,BVB20,B23,L23}.} 
The projection of $\BZ\in\RR^{d\times p}$ onto the tangent space at $\BS_i\in\St(p,d)$ is given by
\begin{align*}
\PP_{\BS_i}(\BZ) & = \BZ - \frac{(\BS_i\BZ^{\top} + \BZ\BS_i^{\top})}{2}\BS_i.
\end{align*}

The Riemannian gradient of $f(\BS)$ is given by
\begin{align*}
\nabla_{\BS_i} f(\BS) & = -\PP_{\BS_i}\left(\sum_{j=1}^n \BA_{ij}\BS_j\right) \\
& = -\sum_{j=1}^n \BA_{ij}\BS_j + \frac{1}{2}\sum_{j=1}^n (\BA_{ij}\BS_j\BS_i^{\top} +\BS_i \BS_j^{\top}\BA_{ji}) \BS_i.
\end{align*}
From now on, we denote $\BLambda_{ii} = \frac{1}{2}\sum_{j=1}^n(\BA_{ij}\BS_j\BS_i^{\top} + \BS_i\BS_j^{\top}\BA_{ji})$, and in particular {if $\BS$ is a critical point, i.e., $\nabla_{\BS_i}f(\BS) = 0$, then $\BLambda_{ii}$ equals
\[
\BLambda_{ii} = \sum_{j=1}^n \BA_{ij}\BS_j\BS_i^{\top}
\]
because $\sum_{j=1}^n \BA_{ij}\BS_j\BS_i^{\top} = \sum_{j=1}^n\BS_i\BS_j^{\top}\BA_{ji}$ holds for any first order critical point $\BS\in\St(p,d)^{\otimes n}$.}

The Riemannian Hessian $\nabla^2 f(\BS)$ of $f(\BS)$ in terms of the quadratic form is
\begin{align*}
\dot{\BS} : \nabla^2 f(\BS) : \dot{\BS} := \lag \BLambda - \BA, \dot{\BS}\dot{\BS}^{\top}\rag
\end{align*}
where $\BLambda = \BDG(\BA\BS\BS^{\top})$ and $\dot{\BS}$ is any element belonging to $T_{\BS_1}({\cal M}) \times \cdots \times T_{\BS_n}({\cal M})$ in $(\RR^{d\times p})^{\otimes n}$.
We say $\BS$ is a second order critical point (SOCP) if $\nabla_{\BS}f(\BS) = 0$ and $\dot{\BS} : \nabla^2 f(\BS) : \dot{\BS}\geq 0$ for any $\dot{\BS}$ in the tangent space. The following lemma characterizes when a SOCP $\BS\in\St(p,d)^{\otimes n}$ is the unique global minimizer to~\eqref{def:ls} and~\eqref{def:sdr}.

\begin{lemma}
Consider the SDP relaxation~\eqref{def:sdr} of~\eqref{def:ls}. A solution $\BS\in\St(p,d)^{\otimes n}$ is the unique global minimizer to~\eqref{def:ls} if
\[
\BLambda = \BDG(\BA\BS\BS^{\top}),~~\BLambda - \BA \succeq 0,~~(\BLambda - \BA)\BS = 0,
\]
and the $(d+1)$-th smallest eigenvalue of $\BLambda - \BA$ is strictly positive. 
\end{lemma}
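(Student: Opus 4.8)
The first three conditions exhibit $\BLambda-\BA$ as a dual certificate for the SDP~\eqref{def:sdr} (and, via the relaxation, for~\eqref{def:ls}), while the fourth, $\lambda_{d+1}(\BLambda-\BA)>0$, is the nondegeneracy that will force the certified solution to be rank-$d$ and unique. So my plan is: (i) show $\BS\BS^{\top}$ has rank exactly $d$, so it equals $\widehat{\BO}\widehat{\BO}^{\top}$ for some $\widehat{\BO}\in\Od(d)^{\otimes n}$; (ii) certify global optimality of $\widehat{\BO}$ by weak duality; (iii) obtain uniqueness from the equality case. For (i): $(\BLambda-\BA)\BS=0$ gives $\Ran(\BS)\subseteq\Null(\BLambda-\BA)$, while $\BS_i\BS_i^{\top}=\I_d$ forces $\rank(\BS)\geq\rank(\BS_i)=d$; so $\dim\Null(\BLambda-\BA)\geq d$, and with $\BLambda-\BA\succeq0$ and $\lambda_{d+1}(\BLambda-\BA)>0$ this pins down $\dim\Null(\BLambda-\BA)=d$, whence $\rank(\BS)=d$ and $\Ran(\BS)=\Null(\BLambda-\BA)$. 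Factoring $\BS=\widehat{\BO}\BQ^{\top}$ with $\BQ\in\RR^{p\times d}$, $\BQ^{\top}\BQ=\I_d$, the constraint $\BS_i\BS_i^{\top}=\I_d$ forces $\widehat{\BO}_i\in\Od(d)$ and $\BS\BS^{\top}=\widehat{\BO}\widehat{\BO}^{\top}$.

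For (ii), any feasible $\BR\in\Od(d)^{\otimes n}$ of~\eqref{def:ls} makes $\BR\BR^{\top}$ feasible for~\eqref{def:sdr}, and I would split
\[
-\lag\BA,\BR\BR^{\top}\rag=-\lag\BLambda,\BR\BR^{\top}\rag+\lag\BLambda-\BA,\BR\BR^{\top}\rag .
\]
Since $\BLambda=\BDG(\BA\BS\BS^{\top})$ is block diagonal and $(\BR\BR^{\top})_{ii}=\BR_i\BR_i^{\top}=\I_d$, the first term equals $-\tr(\BLambda)$, independent of $\BR$, and the second is nonnegative, being the trace inner product of two positive semidefinite matrices. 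Evaluating the identity at $\BR=\widehat{\BO}$ and using $(\BLambda-\BA)\widehat{\BO}=(\BLambda-\BA)\BS\BQ=0$ gives $-\lag\BA,\widehat{\BO}\widehat{\BO}^{\top}\rag=-\tr(\BLambda)$, whence $-\lag\BA,\BR\BR^{\top}\rag\geq-\lag\BA,\widehat{\BO}\widehat{\BO}^{\top}\rag$ for every feasible $\BR$; so $\widehat{\BO}$ attains the minimum of~\eqref{def:ls} and $\BS\BS^{\top}$ is optimal for~\eqref{def:sdr}.

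For (iii), if some feasible $\BR$ is also optimal, equality above forces $\lag\BLambda-\BA,\BR\BR^{\top}\rag=0$; positive semidefiniteness of both factors then gives $(\BLambda-\BA)\BR\BR^{\top}=0$, and multiplying on the right by $\BR$ and using $\BR^{\top}\BR=\sum_i\BR_i^{\top}\BR_i=n\I_d$ gives $(\BLambda-\BA)\BR=0$. Hence $\Ran(\BR)\subseteq\Null(\BLambda-\BA)=\Ran(\widehat{\BO})$, both $d$-dimensional, so $\BR=\widehat{\BO}\BG$ for an invertible $\BG\in\RR^{d\times d}$; substituting into $\BR_i\BR_i^{\top}=\I_d$ with $\widehat{\BO}_i\in\Od(d)$ forces $\BG\BG^{\top}=\I_d$, i.e.\ $\BG\in\Od(d)$, so $\BR\BR^{\top}=\widehat{\BO}\widehat{\BO}^{\top}$ --- the asserted uniqueness, modulo the intrinsic global $\Od(d)$-symmetry of~\eqref{def:ls}.

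I do not expect a real obstacle: this is a standard dual-certificate argument. The two points that need care are the rank count --- where the hypothesis $\lambda_{d+1}(\BLambda-\BA)>0$ is exactly what upgrades $\dim\Null(\BLambda-\BA)\geq d$ to an equality, thereby producing the factorization $\BS\BS^{\top}=\widehat{\BO}\widehat{\BO}^{\top}$ --- and phrasing uniqueness correctly, since~\eqref{def:ls} determines $\widehat{\BO}$ only up to a common right orthogonal transformation.
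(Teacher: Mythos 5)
Your proof is correct and is the standard dual-certificate argument that the paper merely alludes to (the text says only ``follows from the duality theory in convex optimization'' and does not spell out the details, so your write-up is actually more complete than what appears in the paper). The chain of steps is sound: the rank count from $\lambda_{d+1}(\BLambda-\BA)>0$ and $\rank(\BS_i)=d$ correctly pins down $\dim\Null(\BLambda-\BA)=d$ and gives the factorization $\BS\BS^{\top}=\widehat{\BO}\widehat{\BO}^{\top}$; the split $-\lag\BA,\BR\BR^{\top}\rag=-\tr(\BLambda)+\lag\BLambda-\BA,\BR\BR^{\top}\rag$ with the block-diagonal structure of $\BLambda$ and $(\BR\BR^{\top})_{ii}=\I_d$ is exactly the weak-duality certificate; and your equality analysis (trace zero between two PSD matrices forces the product to vanish, then $\BR^{\top}\BR=n\I_d$ upgrades $(\BLambda-\BA)\BR\BR^{\top}=0$ to $(\BLambda-\BA)\BR=0$, whence $\BR=\widehat{\BO}\BG$ with $\BG\in\Od(d)$) correctly delivers uniqueness up to the intrinsic global $\Od(d)$-symmetry. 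You are also right to flag that the paper's phrase ``unique global minimizer'' should be read modulo that symmetry; this is the only sense in which uniqueness can hold for~\eqref{def:ls}.
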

The lemma above follows from the duality theory in convex optimization {such as~\cite{B15,B23,L23}}, and is very useful in characterizing the tightness and global optimality of a candidate solution.

\subsection{Proof of Theorem~\ref{thm:main1} and~\ref{thm:main2}}

Our proof will rely on another two simple yet important lemmas.

\begin{lemma}\label{lem:Xineq}
For any positive semidefinite matrix $\BX\in\RR^{n\times n}$ with diagonal entries equal to $d$, then it holds that
\[
0\leq nd^2 - n^{-1} \|\BX\|_F^2 \leq 2d(nd - n^{-1}\lag \BX,\BJ_n\rag).
\]
\end{lemma}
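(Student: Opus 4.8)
The plan is to prove the two inequalities separately; each reduces to an elementary observation, so the ``main obstacle'' is essentially just bookkeeping.

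For the left inequality I would argue through the spectrum of $\BX$. Let $\lambda_1,\dots,\lambda_n\geq 0$ be its eigenvalues (nonnegative since $\BX\succeq 0$). Because every diagonal entry equals $d$, we have $\sum_i\lambda_i=\tr(\BX)=nd$, while $\|\BX\|_F^2=\sum_i\lambda_i^2$. Expanding $(\sum_i\lambda_i)^2=\sum_i\lambda_i^2+\sum_{i\neq j}\lambda_i\lambda_j$ and discarding the nonnegative cross terms gives $\|\BX\|_F^2\leq (nd)^2$; dividing by $n$ yields $n^{-1}\|\BX\|_F^2\leq nd^2$, i.e.\ $0\leq nd^2-n^{-1}\|\BX\|_F^2$.

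For the right inequality the key observation is that the gap between the two sides is a perfect square. Using $\|\BJ_n\|_F^2=n^2$ and $\langle\BX,d\BJ_n\rangle=d\langle\BX,\BJ_n\rangle$, a direct expansion gives
\[
2d\bigl(nd-n^{-1}\langle\BX,\BJ_n\rangle\bigr)-\bigl(nd^2-n^{-1}\|\BX\|_F^2\bigr)
= n^{-1}\bigl(\|\BX\|_F^2-2d\langle\BX,\BJ_n\rangle+d^2n^2\bigr)
= n^{-1}\|\BX-d\BJ_n\|_F^2\geq 0,
\]
which is precisely the claimed inequality. Note this half never uses $\BX\succeq0$; it holds for any matrix with constant diagonal $d$. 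The only thing to verify is the algebra in the display above, and the one conceptual point is recognizing that the natural object is the deviation $\BX-d\BJ_n$ of $\BX$ from its rank-one ``target'' $d\BJ_n$ — the same quantity the lemma is later invoked to control in the Hessian bound.
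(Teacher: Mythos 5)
Your proof is correct, and the argument for the right inequality is genuinely different from the paper's. The paper forms the ratio
\[
\frac{nd^2 - n^{-1}\|\BX\|_F^2}{nd - n^{-1}\lag \BX,\BJ_n\rag}
= \frac{n^2d^2 - \|\BX\|_F^2}{n^2 d - \lag\BX,\BJ_n\rag},
\]
applies Cauchy--Schwarz to replace $\|\BX\|_F^2$ by $n^{-2}\lag\BX,\BJ_n\rag^2$, factors the resulting difference of squares to get $(n^2 d+\lag\BX,\BJ_n\rag)/n^2$, and then caps that by $2d$ using the entry-wise bound $X_{ij}\le d$ coming from the $2\times 2$ principal minors of a PSD matrix. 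Your route instead completes the square: the right inequality is the identity
\[
2d\bigl(nd - n^{-1}\lag\BX,\BJ_n\rag\bigr) - \bigl(nd^2 - n^{-1}\|\BX\|_F^2\bigr) = n^{-1}\|\BX - d\BJ_n\|_F^2 \ge 0,
\]
which is valid for \emph{any} square matrix, with neither positive semidefiniteness nor the constant-diagonal hypothesis used. This is a mild strengthening and also sidesteps the implicit assumption in the paper's ratio argument that the denominator $nd - n^{-1}\lag\BX,\BJ_n\rag$ is nonzero (which otherwise requires a short separate check of the boundary case $\lag\BX,\BJ_n\rag=n^2 d$). Your left inequality, via eigenvalues and $\Tr(\BX)=nd$, is essentially the same observation as the paper's $\|\BX\|_F\le n d$ and uses PSD in the same way. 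Net effect: same conclusion, a slightly more transparent and slightly stronger argument for the nontrivial half, and the perfect-square form $\|\BX - d\BJ_n\|_F^2$ exposes exactly the quantity that the lemma is later used to control.
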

\begin{proof}
Note that $\lag \BX,\BJ_n\rag \leq n\|\BX\|_F \leq n^2d$ follows from Cauchy-Schwarz inequality. Then 
\begin{align*}
\frac{nd^2 - n^{-1} \|\BX\|_F^2}{nd - n^{-1}\lag \BX,\BJ_n\rag} & =  \frac{n^2d^2 - \|\BX\|_F^2}{n^2d - \lag \BX,\BJ_n\rag} \leq \frac{n^2d^2 - n^{-2}\lag \BX,\BJ_n\rag^2}{n^2d - \lag \BX,\BJ_n\rag} \\
& = \frac{n^2d + \lag \BX,\BJ_n\rag}{n^2} \leq 2d
\end{align*}
gives the result where $X_{ij} \leq d$ for all $(i,j).$
\end{proof}

\begin{lemma}[Schur lemma and trace inequalities]\label{lem:alg}
For two positive semidefinite matrices $\BX$ and $\BY$,
it holds
\begin{equation}\label{eq:alg1}
\BX\circ\BY\succeq 0,~~~\lag \BX,\BY\rag \leq \|\BX\|~\|\BY\|_*
\end{equation}
where $``\circ"$ denotes the Hadamard product.

In addition, assume $\BX$ has a null space of dimension $d$ that belongs to the null space of $\BY$. In particular, let $\BPi$ be the corresponding orthogonal projection matrix of the null space of $\BX$, and it satisfies $\BY\BPi = 0$. Then it holds
\begin{equation}\label{eq:alg2}
\lag \BX,\BY\rag \geq \lambda_{d+1}(\BX) \Tr((\I_n - \BPi)\BY)
\end{equation}
where $\lambda_{d+1}(\BX)$ is the $(d+1)$-th smallest eigenvalue of $\BX.$
\end{lemma}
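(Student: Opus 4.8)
The plan is to establish the three claims separately, each by reducing to a standard spectral fact. For the Schur product statement $\BX\circ\BY\succeq 0$, I would invoke the classical Schur product theorem: writing $\BX = \BU\BU^{\top}$ and $\BY = \BV\BV^{\top}$, one checks that $\BX\circ\BY = \BZ\BZ^{\top}$ where the rows of $\BZ$ are the Khatri--Rao (columnwise Kronecker) products of the corresponding rows of $\BU$ and $\BV$; hence $\BX\circ\BY$ is a Gram matrix and therefore positive semidefinite. For the inequality $\lag \BX,\BY\rag \le \|\BX\|\,\|\BY\|_*$, I would diagonalize $\BY = \sum_k \mu_k \bv_k\bv_k^{\top}$ with $\mu_k\ge 0$ and compute $\lag\BX,\BY\rag = \sum_k \mu_k \bv_k^{\top}\BX\bv_k \le \sum_k \mu_k \lambda_{\max}(\BX) = \|\BX\|\,\Tr(\BY) = \|\BX\|\,\|\BY\|_*$, using $\BX\succeq 0$ so that $\|\BX\| = \lambda_{\max}(\BX)$ and $\|\BY\|_* = \Tr(\BY)$.

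For the last inequality~\eqref{eq:alg2}, the idea is to use the spectral decomposition of $\BX$ restricted to the orthogonal complement of its null space. Since $\BPi$ projects onto $\Null(\BX)$ of dimension $d$, we have $\BX = \BX(\I_n - \BPi) = (\I_n-\BPi)\BX(\I_n - \BPi)$, and on $\Ran(\I_n-\BPi)$ every eigenvalue of $\BX$ is at least $\lambda_{d+1}(\BX)$; hence $\BX \succeq \lambda_{d+1}(\BX)(\I_n - \BPi)$. Taking the inner product with $\BY\succeq 0$ gives $\lag\BX,\BY\rag \ge \lambda_{d+1}(\BX)\lag \I_n-\BPi,\BY\rag = \lambda_{d+1}(\BX)\Tr((\I_n-\BPi)\BY)$, which is the claim; the hypothesis $\BY\BPi = 0$ is not even needed for this direction but is recorded because it is how the lemma will be applied (it guarantees $\Tr((\I_n-\BPi)\BY) = \Tr(\BY)$).

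None of the three parts presents a genuine obstacle — each is a one-line consequence of a textbook fact (Schur product theorem, von Neumann/trace--norm duality, and the variational characterization of eigenvalues). The only point requiring a little care is making sure the matrix ordering $\BX \succeq \lambda_{d+1}(\BX)(\I_n-\BPi)$ is stated correctly when $\BX$ may have eigenvalues in the gap between $0$ and $\lambda_{d+1}(\BX)$; but by definition of $\lambda_{d+1}$ as the $(d+1)$-th \emph{smallest} eigenvalue and the assumption that $\BX$ has exactly a $d$-dimensional kernel, the nonzero eigenvalues are precisely $\lambda_{d+1}(\BX)\le\cdots\le\lambda_n(\BX)$, so the ordering holds on the nose.
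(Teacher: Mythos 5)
Your proof is correct and takes essentially the same route as the paper. For~\eqref{eq:alg2} you package the spectral-gap bound as the Loewner inequality $\BX\succeq\lambda_{d+1}(\BX)(\I_n-\BPi)$ and then pair it with $\BY\succeq 0$, whereas the paper diagonalizes $(\I_n-\BPi)\BY(\I_n-\BPi)$ and estimates eigenvector by eigenvector --- the same argument with different bookkeeping --- and your observation that the hypothesis $\BY\BPi=0$ is not actually needed for the inequality itself (only to make $\Tr((\I_n-\BPi)\BY)=\Tr(\BY)$ in the application) is correct.
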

\begin{proof}
The first equation~\eqref{eq:alg1} follows from directly Schur lemma, and Von Neumann's trace inequality. It is also straightforward to show~\eqref{eq:alg2}: let $\bu_i$ be the eigenvector of $(\I_n - \BPi)\BY(\I_n - \BPi)$ with eigenvalue $\lambda_i$, i.e., $(\I_n - \BPi)\BY(\I_n - \BPi) = \sum_i \lambda_i \bu_i\bu_i^{\top}$ with $\lambda_i \geq 0$. Then
\begin{align*}
\lag \BX,\BY\rag & = \sum_i \lambda_i \lag \BX, \bu_i\bu_i^{\top}\rag  = \sum_i \lambda_i\lag \BX, (\I_n -\BPi) \bu_i\bu_i^{\top}(\I_n -\BPi)\rag \\
& \geq \lambda_{d+1}(\BX) \sum_i \lambda_i \|(\I_n - \BPi)\bu_i\|^2 \geq \lambda_{d+1}(\BX) \sum_i \lambda_i \|\bu_i\|^2 \\
& = \lambda_{d+1}(\BX)\Tr( (\I_n - \BPi)\BY(\I_n - \BPi) )
\end{align*}
where $\BPi\bu_i= 0$ and $\BX\BPi = 0.$
\end{proof}

\begin{lemma}\label{lem:gauss}
For any $\BPhi\in\RR^{d\times p}$ Gaussian random matrix, i.e., all entries are i.i.d. standard normal, it holds:
\begin{align*}
\E(\BPhi\BU^{\top}\BV\BPhi^{\top}) = \lag \BU, \BV\rag \I_d, \qquad \E(\BPhi\BU^{\top}\BPhi\BV^{\top}) = \BU\BV^{\top}
\end{align*}
for any $\BU$ and $\BV\in\RR^{d\times p}$ that are independent of $\BPhi.$
\end{lemma}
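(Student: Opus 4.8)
The plan is to verify both identities by a direct entrywise computation, exploiting the single elementary fact that $\E[\Phi_{ik}\Phi_{jl}] = \delta_{ij}\delta_{kl}$ for a matrix $\BPhi = (\Phi_{ik})$ with i.i.d.\ standard normal entries, together with linearity of expectation and the independence of $\BPhi$ from $\BU$ and $\BV$ (so that, after conditioning, the entries of $\BU$ and $\BV$ may be treated as deterministic constants).

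For the first identity, I would write $\BM := \BU^{\top}\BV \in \RR^{p\times p}$, so that the $(i,j)$ entry of $\BPhi\BM\BPhi^{\top}$ is $\sum_{k,l} \Phi_{ik}M_{kl}\Phi_{jl}$, where $i,j$ range over $[d]$ and $k,l$ over $[p]$. Taking expectations and substituting $\E[\Phi_{ik}\Phi_{jl}] = \delta_{ij}\delta_{kl}$ collapses the double sum to $\delta_{ij}\sum_k M_{kk} = \delta_{ij}\Tr(\BM)$; since $\Tr(\BU^{\top}\BV) = \lag\BU,\BV\rag$, this yields $\E(\BPhi\BU^{\top}\BV\BPhi^{\top}) = \lag\BU,\BV\rag\,\I_d$.

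For the second identity, I would expand the $(i,j)$ entry of $\BPhi\BU^{\top}\BPhi\BV^{\top}$ as $\sum_{k,l,m}\Phi_{ik}\,U_{lk}\,\Phi_{lm}\,V_{jm}$, with $k,m \in [p]$ and $l \in [d]$. Applying $\E[\Phi_{ik}\Phi_{lm}] = \delta_{il}\delta_{km}$ forces $l = i$ and $m = k$, leaving $\sum_k U_{ik}V_{jk} = (\BU\BV^{\top})_{ij}$, i.e.\ $\E(\BPhi\BU^{\top}\BPhi\BV^{\top}) = \BU\BV^{\top}$.

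I do not anticipate a genuine obstacle: the argument is purely a second-moment computation. The only point requiring care is the index bookkeeping — keeping straight which summation indices run over $[d]$ versus $[p]$, and using that $\E[\Phi_{ik}\Phi_{jl}] = \delta_{ij}\delta_{kl}$ is exactly the statement that the entries of $\BPhi$ are independent with unit variance. Only the expectations are claimed here, so no concentration or higher-moment estimates enter.
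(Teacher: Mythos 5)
Your computation is correct and is precisely the ``straightforward verification using independence of all entries in $\BPhi$'' that the paper alludes to without writing out; both identities reduce to $\E[\Phi_{ik}\Phi_{jl}]=\delta_{ij}\delta_{kl}$ exactly as you describe, and the index bookkeeping checks out.
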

Lemma~\ref{lem:gauss} is straightforward to verify by using independence of all entries in $\BPhi.$ With the three lemmas above, we are ready to present the proof of Theorem~\ref{thm:main2}. In fact, the chosen random direction is similar to the one presented in~\cite{MB23}, and the key here is to provide a tight bound on the resulting quadratic form that only depends on the condition number of the ``Laplacian" $\BL.$

\begin{proof}[\bf Proof of Theorem~\ref{thm:main2}]
Assume $\widehat{\BO}_i\in\Od(d)$ is the global minimizer to~\eqref{def:ls} and $\widehat{\BO}\widehat{\BO}^{\top}$ satisfies the optimality condition of~\eqref{def:sdr}, i.e.,
\[
\widehat{\BLambda} = \BDG(\BA\widehat{\BO}\widehat{\BO}^{\top}),~~(\widehat{\BLambda} - \BA) \widehat{\BO} = 0,~~~\widehat{\BLambda} - \BA\succeq 0,
\]
where $\widehat{\BLambda}_{ii}  = \sum_{j=1}^n\BA_{ij}\widehat{\BO}_j\widehat{\BO}_i^{\top}$ is symmetric.
From now on, we denote $\BL := \widehat{\BLambda} - \BA\succeq 0$.
For any SOCP $\BS\in\St(p,d)^{\otimes n}$, it satisfies:
\[
\lag \BDG(\BA\BS\BS^{\top}) - \BA, \dot{\BS}\dot{\BS}^{\top} \rag \geq 0,~~~~\sum_{j=1}^n \BA_{ij}\BS_j\BS_i^{\top} \text{ is symmetric}
\] 
where $\dot{\BS}$ is in the tangent space of Stiefel manifold at $\BS.$
Note that
\begin{align*}
\BDG(\BA\BS\BS^{\top}) - \BA & = \BDG((\widehat{\BLambda} - \BL)\BS\BS^{\top}) - (\widehat{\BLambda} - \BL) \\
& = \BL-\BDG(\BL\BS\BS^{\top}) + \BDG(\widehat{\BLambda} \BS\BS^{\top}) - \widehat{\BLambda} \\
& = \BL - \BDG(\BL\BS\BS^{\top})
\end{align*}
where $[\BDG(\widehat{\BLambda}\BS\BS^{\top})]_{ii} = \widehat{\BLambda}_{ii}\BS_i\BS_i^{\top} = \widehat{\BLambda}_{ii}$ implies $\BDG(\widehat{\BLambda}\BS\BS^{\top}) = \widehat{\BLambda}$. Therefore, it holds that
\begin{equation}\label{eq:socp}
\lag \BL-\BDG(\BL\BS\BS^{\top}), \dot{\BS}\dot{\BS}^{\top} \rag \geq 0
\end{equation}
for any second order critical point $\BS.$

We choose a random direction:
\[
\dot{\BS}_i = \widehat{\BO}_i\BPhi - \BS_i \BPhi^{\top} \widehat{\BO}_i^{\top}\BS_i\in\RR^{d\times p}
\]
where $\BPhi\in\RR^{d\times p}$ is a Gaussian random independent of $\widehat{\BO}$. {Similar approaches for selecting random directions have been proposed in~\cite{MTG20,MB23,MMMO17,L23}. The underlying intuition is to choose a random direction at $\BS$ that points toward the global minimizer $\widehat{\BO}$.}
It is easy to verify that $\dot{\BS}_i$ is in the tangent space of $\BS_i\in\St(p,d)$:
\begin{align*}
\dot{\BS}_i\BS_i^{\top} + \BS_i \dot{\BS}_i^{\top} & = ( \widehat{\BO}_i\BPhi - \BS_i \BPhi^{\top} \widehat{\BO}_i^{\top}\BS_i)\BS_i^{\top} + \BS_i (\BPhi^{\top}\widehat{\BO}_i^{\top} - \BS_i^{\top}\widehat{\BO}_i \BPhi\BS_i^{\top}) \\
& = \widehat{\BO}_i\BPhi\BS_i^{\top} - \BS_i\BPhi^{\top}\widehat{\BO}_i^{\top} +\BS_i\BPhi^{\top}\widehat{\BO}_i^{\top} - \widehat{\BO}_i\BPhi\BS_i^{\top} = 0.
\end{align*}
Taking expectation w.r.t. the Gaussian random vector $\BPhi$ leads to
\begin{align*}
\E\dot{\BS}_i \dot{\BS}_j^{\top} & = \E ( \widehat{\BO}_i\BPhi - \BS_i \BPhi^{\top} \widehat{\BO}_i^{\top}\BS_i)( \widehat{\BO}_j\BPhi - \BS_j \BPhi^{\top} \widehat{\BO}_j^{\top}\BS_j)^{\top} \\
& = \E \left( \widehat{\BO}_i\BPhi\BPhi^{\top}\widehat{\BO}_j^{\top} 
- \BS_i\BPhi^{\top}\widehat{\BO}_i^{\top}\BS_i \BPhi^{\top}\widehat{\BO}_j^{\top}  
- \widehat{\BO}_i\BPhi \BS_j^{\top}\widehat{\BO}_j\BPhi\BS_j^{\top} 
+ \BS_i\BPhi^{\top}\widehat{\BO}_i^{\top}\BS_i \BS_j^{\top}\widehat{\BO}_j\BPhi\BS_j^{\top}\right) \\
& = p\widehat{\BO}_i\widehat{\BO}_j^{\top} - \BS_i \BS_i^{\top}\widehat{\BO}_i\widehat{\BO}_j^{\top} -\widehat{\BO}_i\widehat{\BO}_j^{\top}  \BS_j\BS_j^{\top} + \lag \widehat{\BO}_i^{\top}\BS_i,\widehat{\BO}_j^{\top}\BS_j\rag\BS_i\BS_j^{\top} \\
& = (p-2)\widehat{\BO}_i\widehat{\BO}_j^{\top} + \lag \widehat{\BO}_i^{\top}\BS_i, \widehat{\BO}_j^{\top}\BS_j\rag\BS_i\BS_j^{\top}
\end{align*}
following from Lemma~\ref{lem:gauss} and $\BS_i\BS_i^{\top} = \I_d.$

Now we compute $\lag \BL - \BDG(\BL\BS\BS^{\top}), \E\dot{\BS}\dot{\BS}^{\top}\rag$ in~\eqref{eq:socp}:
\begin{equation}\label{eq:socp1pre}
\begin{aligned}
\lag \BL,\E\dot{\BS} \dot{\BS}^{\top}\rag & = (p-2)\lag \BL,\widehat{\BO}\widehat{\BO}^{\top}\rag  + \sum_{i,j} \lag \widehat{\BO}_i^{\top}\BS_i, \widehat{\BO}_j^{\top}\BS_j\rag \lag \BL_{ij}, \BS_i\BS_j^{\top}\rag \\
& = \sum_{i,j} \lag \widehat{\BO}_i^{\top}\BS_i, \widehat{\BO}_j^{\top}\BS_j\rag \lag \BL_{ij}, \BS_i\BS_j^{\top}\rag
\end{aligned}
\end{equation}
where $\BL \widehat{\BO} = 0$. Define 
\begin{equation}\label{def:X}
X_{ij} = \lag \widehat{\BO}_i^{\top}\BS_i, \widehat{\BO}_j^{\top}\BS_j\rag = \lag \BS_i\BS_j^{\top}, \widehat{\BO}_i\widehat{\BO}_j^{\top} \rag,
\end{equation} and apparently $\BX\in\RR^{n\times n}$ is positive semidefinite  with $X_{ii} = d.$ Then it holds
\begin{align*}
\lag \BL,\E\dot{\BS}\dot{\BS}^{\top}\rag & = \sum_{i,j}X_{ij} \lag \BL_{ij}, \BS_i\BS_j^{\top}\rag = \lag \BL, (\BX\otimes \BJ_d)\circ \BS\BS^{\top} \rag \\
(\text{Use }\BL\widehat{\BO} = 0)~~~~  & = \lag \BL, (\I_{nd} - n^{-1}\widehat{\BO}\widehat{\BO}^{\top}) ((\BX\otimes \BJ_d)\circ \BS\BS^{\top})(\I_{nd} - n^{-1}\widehat{\BO}\widehat{\BO}^{\top}) \rag \\
(\text{Lemma~\ref{lem:alg}})~~~~ & \leq \lambda_{\max}(\BL)\cdot \|(\I_{nd} - n^{-1}\widehat{\BO}\widehat{\BO}^{\top}) ((\BX\otimes \BJ_d)\circ \BS\BS^{\top})(\I_{nd} - n^{-1}\widehat{\BO}\widehat{\BO}^{\top})\|_*
\end{align*}

Note that $(\BX\otimes \BJ_d)\circ \BS\BS^{\top}$ is positive semidefinite, and then the nuclear norm equals
\begin{align*}
& \Tr\left[(\I_{nd} - n^{-1}\widehat{\BO}\widehat{\BO}^{\top}) ((\BX\otimes \BJ_d)\circ \BS\BS^{\top})(\I_{nd} - n^{-1}\widehat{\BO}\widehat{\BO}^{\top})\right] \\
& = \lag (\BX\otimes \BJ_d)\circ \BS\BS^{\top}, \I_{nd} - n^{-1}\widehat{\BO}\widehat{\BO}^{\top}\rag \\
& = \Tr( (\BX\otimes \BJ_d)\circ \BS\BS^{\top}) - n^{-1} \lag (\BX\otimes \BJ_d)\circ \BS\BS^{\top} , \widehat{\BO}\widehat{\BO}^{\top}\rag \\
& = \sum_{i=1}^n X_{ii}\Tr(\BS_i\BS_i^{\top}) - n^{-1}\sum_{i,j}X_{ij} \lag \BS_i\BS_j^{\top}, \widehat{\BO}_i\widehat{\BO}_j^{\top} \rag = nd^2 - n^{-1}\|\BX\|_F^2
\end{align*}
where $X_{ii} = d$ and $(\BX\otimes \BJ_d)\circ\BS\BS^{\top}$ is positive semidefinite whose diagonal blocks are $d\I_d.$ Therefore, we have
\begin{equation}\label{eq:socp1}
\lag \BL, \E\dot{\BS}\dot{\BS}^{\top}\rag \leq \lambda_{\max}(\BL)\cdot(nd^2 - n^{-1}\|\BX\|_F^2).
\end{equation}

For the second term $\BDG(\BL\BS\BS^{\top})$ in~\eqref{eq:socp}, it holds 
\begin{align*}
\lag \BDG(\BL\BS\BS^{\top}),\E\dot{\BS}\dot{\BS}^{\top}\rag & = (p+d-2)\lag\BDG(\BL\BS\BS^{\top}), \I_{nd} \rag \\
& = (p+d-2) \sum_{i=1}^n\sum_{j=1}^n \lag \BL_{ij}, \BS_i\BS_j^{\top} \rag = (p+d-2)\lag \BL,\BS\BS^{\top}\rag
\end{align*}
where $[\E \dot{\BS}_i\dot{\BS}_i^{\top}]_{ii} = (p+d-2)\I_d.$
Note that
\begin{equation}\label{eq:socp2}
\begin{aligned}
\lag \BL,\BS\BS^{\top}\rag & = \lag \BL,(\I_{nd} - n^{-1}\widehat{\BO}\widehat{\BO}^{\top})\BS\BS^{\top}(\I_{nd} - n^{-1}\widehat{\BO}\widehat{\BO}^{\top})\rag \\
(\text{Lemma~\ref{lem:alg}})~~~& \geq  \lambda_{d+1}(\BL)\cdot \lag\BS\BS^{\top}, \I_{nd} - n^{-1}\widehat{\BO}\widehat{\BO}^{\top} \rag \\
& = \lambda_{d+1}(\BL)\cdot \left(nd - \frac{\|\widehat{\BO}^{\top}\BS\|_F^2}{n}\right) \\
& = \lambda_{d+1}(\BL)\cdot \left(nd - \frac{\lag \BX,\BJ_n\rag}{n}\right)
\end{aligned}
\end{equation}
where $\|\widehat{\BO}^{\top}\BS\|_F^2 = \|\sum_{i=1}^n\widehat{\BO}_i^{\top}\BS_i \|_F^2 = \sum_{i,j}\lag\widehat{\BO}_i^{\top}\BS_i,\widehat{\BO}_j^{\top}\BS_j \rag = \lag \BX,\BJ_n\rag.$

As a result, combining~\eqref{eq:socp1} with~\eqref{eq:socp2} gives 
\begin{align}
& \lag \BL - \BDG(\BL\BS\BS^{\top}), \E\dot{\BS}\dot{\BS}^{\top}\rag \nonumber \\
& \leq  \lambda_{\max}(\BL)\cdot (nd^2 - n^{-1} \|\BX\|_F^2 )  - (p+d-2) \lambda_{d+1}(\BL)\cdot (nd - n^{-1}\lag \BX,\BJ_n\rag) \nonumber \\
& \leq (2d\lambda_{\max}(\BL) - (p+d-2)\lambda_{d+1}(\BL))\cdot (nd - n^{-1}\lag \BX,\BJ_n\rag) \label{eq:mainthm_final}
\end{align}
following from Lemma~\ref{lem:Xineq}. Note that $\lag \BX,\BJ_n\rag\leq nd^2$. Suppose 
\[
p+d-2\geq \frac{2d\lambda_{\max}(\BL)}{\lambda_{d+1}(\BL)},
\]
then $\lag \BL - \BDG(\BL\BS\BS^{\top}), \E\dot{\BS}\dot{\BS}^{\top}\rag\leq 0$. As every second order critical point satisfies~\eqref{eq:socp}, this implies that at any local minimizer, it must hold $\lag \BX,\BJ_n\rag = n^2 d$ which gives $\BX = d\BJ_n$ since $\BX$ is positive semidefinite and the diagonal entries are $d$. Thus $X_{ij} = \lag \widehat{\BO}_i^{\top}\BS_i, \widehat{\BO}_j^{\top}\BS_j\rag =d$ indicates that $\widehat{\BO}_i^{\top}\BS_i = \widehat{\BO}_j^{\top}\BS_j$ and $\widehat{\BO}\widehat{\BO}^{\top} = \BS\BS^{\top}$. Note that the proof does not assume any assumption on $d$, which also holds for $d=1$ automatically.
\end{proof}

\begin{proof}[\bf Proof of Theorem~\ref{thm:adv}: robustness against monotone adversaries] 
Consider $\BA = \BA_M + \BDelta$ where $\BA_M$ is the original data matrix and $\BDelta$ represents the additive monotone adversaries. 
To study the landscape, it suffices to consider the Laplacian associated with $\BA_M$ and $\BDelta$ respectively:
\[
\BL = \underbrace{\BDG(\BA_M\bx\bx^{\top}) - \BA_M}_{\BL_M} + \underbrace{\BDG( \BDelta \bx\bx^{\top}) - \BDelta}_{\BL_{\Delta}}
\]
where $\BL_M\bx = \BL_{\Delta}\bx = 0$ and $\BDelta\circ \bx\bx^{\top}\geq 0$ holds since $\BDelta$ is the ``favorable" adversary.

We pick up the random direction in the tangent space $\dot{\BS}_i = x_i (\BPhi - \BS_i \BPhi^{\top}\BS_i)$ where $\BPhi\in\RR^{1\times p}$ is a standard Gaussian random vector and $\BS_i\in\RR^{1\times p}$ satisfies $\BS_i\BS_i^{\top}=1$. 
The idea is to analyze the quadratic form associated with $\BL_M$ and $\BL_{\Delta}$, and show that they have a common negative curvature direction. 

By letting $d=1$ and $\BX= \BS\BS^{\top}\circ \bx\bx^{\top}$ in~\eqref{def:X}, the equation~\eqref{eq:mainthm_final} implies
\begin{equation}\label{eq:adv1}
\lag \BL_M - \BDG(\BL_M\BS\BS^{\top}), \E\dot{\BS}\dot{\BS}^{\top}\rag  \leq (2\lambda_{\max}(\BL_M) - (p-1)\lambda_{2}(\BL_M))\cdot (n - n^{-1}\lag \BX,\BJ_n\rag) \leq 0.
\end{equation}

For the second part $\lag \BL_{\Delta}-\BDG(\BL_{\Delta}\BS\BS^{\top}),\E\dot{\BS}\dot{\BS}^{\top} \rag$, we adopt the technique in~\cite{MB23}. The key part is to estimate $\lag \BL_{\Delta}, \E \dot{\BS}\dot{\BS}^{\top}\rag$: the equation~\eqref{eq:socp1pre} with $\widehat{\BO}_i = x_i$ implies
\begin{align*}
\lag \BL_{\Delta}, \E \dot{\BS}\dot{\BS}^{\top}\rag & = \sum_{i,j} L_{\Delta,ij} x_ix_j|\lag \BS_i, \BS_j\rag|^2.
\end{align*}
Note that
\[
x_ix_j\lag \BS_i, \BS_j\rag = 1 - \frac{1}{2}\|x_i\BS_i - x_j\BS_j\|^2 
\]
where $x_i\in\{\pm 1\}$ and thus
\[
|\lag \BS_i, \BS_j\rag|^2 = 1 - \|x_i\BS_i - x_j\BS_j\|^2 + \frac{1}{4}\|x_i\BS_i - x_j\BS_j\|^4 = - 1 + 2x_ix_j\lag \BS_i,\BS_j\rag +  \frac{1}{4}\|x_i\BS_i - x_j\BS_j\|^4.
\]
This gives rise to 
\begin{align*}
\lag \BL_{\Delta}, \E \dot{\BS}\dot{\BS}^{\top}\rag & =2 \sum_{i,j} x_ix_jL_{\Delta,ij} \cdot x_ix_j\lag\BS_i,\BS_j \rag + \frac{1}{4}\sum_{i\neq j} x_i x_j L_{\Delta,ij}  \|x_i\BS_i -x_j \BS_j\|^4 \\
& = 2\lag \BL_{\Delta}, \BS\BS^{\top}\rag -\frac{1}{4} \sum_{i,j}x_ix_j\Delta_{ij} \|x_i\BS_i -x_j \BS_j\|^4
\end{align*}
where $L_{\Delta,ij} = -\Delta_{ij}$ for $i\neq j$ and $\BL_{\Delta}\bx = 0.$
On the other hand, we have
\[
\lag \BDG(\BL_{\Delta}\BS\BS^{\top}), \dot{\BS}\dot{\BS}^{\top}\rag = (p-1) \lag \BL_{\Delta}, \BS\BS^{\top}\rag
\]
where $\E \dot{\BS}_i\dot{\BS}_i^{\top} = p-1.$

Therefore, we have
\begin{equation}\label{eq:adv2}
\begin{aligned}
& \lag \BL_{\Delta} -\BDG(\BL_{\Delta}\BS\BS^{\top}), \E \dot{\BS}\dot{\BS}^{\top}\rag  \\
& \qquad = 2 \lag \BL_{\Delta}, \BS\BS^{\top}\rag - \frac{1}{4}\sum_{i,j} x_ix_j\Delta_{ij} \| x_i\BS_i - x_j\BS_j \|_F^4 - (p-1)\lag \BL_{\Delta},\BS\BS^{\top}\rag \\
& \qquad = (3 - p) \lag \BL_{\Delta},\BS\BS^{\top}\rag -  \frac{1}{4}\sum_{i,j} x_ix_j \Delta_{ij}\| x_i\BS_i - x_j\BS_j \|_F^4 \leq 0
\end{aligned}
\end{equation}
for any $p\geq 3$ since $x_ix_j \Delta_{ij} \geq 0$ holds for any $i\neq j$ and $\BL_{\Delta}\succeq 0$.

Therefore, provided that
\[
2\lambda_{\max}(\BL_M) - (p-1)\lambda_2(\BL_M) \leq 0,~~p\geq 3,
\]
i.e., $p \geq 2\lambda_{\max}(\BL_M)/ \lambda_2(\BL_M)+1,$ 
we have
$\lag \BL-  \BDG(\BL \BS\BS^{\top}), \E\dot{\BS}\dot{\BS}^{\top}\rag \leq 0$
which follows from~\eqref{eq:adv1} and~\eqref{eq:adv2}.
However, for any second order critical point, the quadratic form above is nonnegative. Therefore, it is only possible if the value equals 0 which gives $x_i \BS_i = x_j \BS_j$, i.e., $\bx\bx^{\top} = \BS\BS^{\top}.$ In other words, the only local minimizer is given by $\BS$ satisfying $\BS\BS^{\top} = \bx\bx^{\top}$ which is also the global minimizer.
\end{proof}

\subsection{Proof of Corollary~\ref{cor:odsync} and~\ref{cor:gopp}}\label{ss:cor}

\begin{proof}[\bf Proof of Corollary~\ref{cor:odsync}: orthogonal group synchronization with Gaussian noise]
Theorem 3.2 and 4.1 in~\cite{L22} state that if $\sigma < c_0\sqrt{n}/(\sqrt{d}(\sqrt{d} + \sqrt{\log n}))$ for some constant $c_0>0$, then with high probability, the global minimizer $\widehat{\BO}$ to the SDR~\eqref{def:sdr} satisfies
\begin{align}
& \min_{\BQ\in\Od(d)}\|\widehat{\BO} - \BO\BQ\|_F \leq  \eps \sqrt{ nd}, \label{eq:o_error} \\
&\max_{1\leq i\leq n}\|\BW_i^{\top}\widehat{\BO}\|_F \leq \xi\sqrt{nd}(\sqrt{d} + 4\sqrt{\log n}), \nonumber
\end{align}
where  $\xi = 2+3\eps$ and 
\[
\eps \sqrt{nd} \lesssim \sigma\sqrt{d} (\sqrt{d} + 4\sqrt{\log n}) \lesssim \sqrt{n}.
\]
Here the minimizer $\BQ$ to~\eqref{eq:o_error} is given by the matrix sign of $\BO^{\top}\widehat{\BO}$, i.e., $\BQ$ equals the product of left and right singular vectors of $\BO^{\top}\widehat{\BO}$ and
\[
\|\widehat{\BO} - \BO\BQ\|_F^2 = 2nd - 2\lag \widehat{\BO}, \BO\BQ\rag = 2nd - 2\|\BO^{\top}\widehat{\BO}\|_*.
\]

Denote $\BLambda = \BDG(\BA\widehat{\BO}\widehat{\BO}^{\top})$ and we have
\begin{align*}
\BL & = \BLambda - \BA=  (\I_{nd} - n^{-1}\widehat{\BO}\widehat{\BO}^{\top} )(\BLambda - \BA) (\I_{nd} - n^{-1}\widehat{\BO}\widehat{\BO}^{\top} ).
\end{align*}
The top and $(d+1)$-th smallest eigenvalue of $\BL$ are bounded by
\begin{align*}
\lambda_{\max}(\BL) & \leq \lambda_{\max}(\BLambda) + \|  (\I_{nd} - n^{-1}\widehat{\BO}\widehat{\BO}^{\top} )  \BA  (\I_{nd} - n^{-1}\widehat{\BO}\widehat{\BO}^{\top} ) \|, \\
\lambda_{d+1}(\BL) & \geq \lambda_{\min}(\BLambda) - \|  (\I_{nd} - n^{-1}\widehat{\BO}\widehat{\BO}^{\top} )  \BA  (\I_{nd} - n^{-1}\widehat{\BO}\widehat{\BO}^{\top} ) \|, 
\end{align*}
and
\begin{align*}
\|  (\I_{nd} - n^{-1}\widehat{\BO}\widehat{\BO}^{\top} )  \BA  (\I_{nd} - n^{-1}\widehat{\BO}\widehat{\BO}^{\top} ) \| \leq \| (\I_{nd} - n^{-1}\widehat{\BO}\widehat{\BO}^{\top})\BO\|_F^2 + \sigma \|\BW\|
\end{align*}
where $\BA = \BO\BO^{\top} + \sigma\BW.$

Note that at the global minimizer, $\BLambda_{ii} = \sum_{j=1}^n \BA_{ij}\widehat{\BO}_j\widehat{\BO}_i^{\top}$ is symmetric and  positive semidefinite. Therefore, its eigenvalues match the singular values. 
For the $i$-th block, it holds 
\[
\lambda_{\min}(\BLambda_{ii}) = \sigma_{\min}\left( \BO_i\BO^{\top}\widehat{\BO}\widehat{\BO}^{\top}_i  + \sigma \BW_i^{\top}\widehat{\BO}\widehat{\BO}_i^{\top} \right)
\]
where $\BW_i$ is the $i$-th block column of $\BW.$
The smallest and largest eigenvalues of $\BLambda$ satisfy
\begin{align*}
\lambda_{\max}(\BLambda) & \leq \sigma_{\max}(\BO^{\top}\widehat{\BO}) + \sigma \max_{1\leq i\leq n}\|\BW_i^{\top}\widehat{\BO}\|, \\
\lambda_{\min}(\BLambda) & \geq \sigma_{\min}(\BO^{\top}\widehat{\BO}) - \sigma \max_{1\leq i\leq n}\|\BW_i^{\top}\widehat{\BO}\|.
\end{align*}
Then we have
\begin{align*}
\lambda_{\max}(\BL) & \leq \sigma_{\max}(\BO^{\top}\widehat{\BO}) + \sigma \max_{1\leq i\leq n}\|\BW_i^{\top}\widehat{\BO}\| + \| (\I_{nd} - n^{-1}\widehat{\BO}\widehat{\BO}^{\top})\BO\|_F^2 + \sigma \|\BW\|,\\
\lambda_{d+1}(\BL) & \geq \sigma_{\min}(\BO^{\top}\widehat{\BO}) - \sigma \max_{1\leq i\leq n}\|\BW_i^{\top}\widehat{\BO}\| - \| (\I_{nd} - n^{-1}\widehat{\BO}\widehat{\BO}^{\top})\BO\|_F^2 - \sigma \|\BW\|.
\end{align*}

Note that
\[
\|\widehat{\BO} - \BO\BQ\|_F \leq  \eps \sqrt{nd} \Longleftrightarrow 
2 nd - 2\|\BO^{\top}\widehat{\BO}\|_* \leq \eps^2 nd \Longrightarrow \sigma_{\min}(\BO^{\top}\widehat{\BO}) \geq \left(1-\frac{\eps^2d}{2}\right)n
\]
where all the $d$ singular values of $\BO^{\top}\widehat{\BO}$ are no larger than $n$.
This gives
\[
 \left(1-\frac{\eps^2d}{2}\right)n \leq \sigma_{\min}(\BO^{\top}\widehat{\BO}) \leq \sigma_{\max}(\BO^{\top}\widehat{\BO}) \leq n.
\]
Moreover, it holds
\begin{equation}\label{eq:ohato}
 \| (\I_{nd} - n^{-1}\widehat{\BO}\widehat{\BO}^{\top})\BO\|_F =  \| (\I_{nd} - n^{-1}\widehat{\BO}\widehat{\BO}^{\top})(\BO - \widehat{\BO} \BQ^{\top})\|_F \leq \|\widehat{\BO} - \BO\BQ\|_F.
\end{equation}

Then we have
\begin{align*}
\lambda_{d+1}(\BL) & \geq \left( 1- \frac{\eps^2 d}{2}\right)n - \xi\sigma\sqrt{nd}(\sqrt{d}+4\sqrt{\log n}) - (\|\widehat{\BO} - \BO\BQ\|_F^2 + 3\sigma\sqrt{nd}) \\
& \geq \left( 1- \frac{3\eps^2d}{2}\right)n - (3+\xi) \sigma \sqrt{nd}(\sqrt{d} + 4\sqrt{\log n}) \\
& \geq n - C \sigma \sqrt{nd}(\sqrt{d} + 4\sqrt{\log n})
\end{align*}
for some $C>0$
where $\|\BW\| \leq 3\sqrt{nd}$ and
\[
\eps^2 nd \lesssim \sqrt{n}\cdot \eps\sqrt{nd} \lesssim \sigma\sqrt{nd}(\sqrt{d} + 4\sqrt{\log n}).
\]
It means
\[
n - \lambda_{d+1}(\BL)\leq C \sigma\sqrt{nd}(\sqrt{d} + 4\sqrt{\log n})
\]
for some constant $C>0.$ Similarly, using the estimation above gives $\lambda_{\max}(\BL) \leq n+C \sigma\sqrt{nd}(\sqrt{d} + 4\sqrt{\log n})$. Finally, applying Theorem~\ref{thm:main2} leads to a sufficient condition for the benign landscape:
\[
\frac{\lambda_{\max}(\BL)}{\lambda_{d+1}(\BL)} \leq \frac{n+C \sigma\sqrt{nd}(\sqrt{d} + 4\sqrt{\log n})}{n-C \sigma\sqrt{nd}(\sqrt{d} + 4\sqrt{\log n})} \leq \frac{p+d-2}{2d}
\]
and thus Corollary~\ref{cor:odsync} follows.
\end{proof}

\begin{proof}[\bf Proof of Corollary~\ref{cor:gopp}: generalized orthogonal Procrustes problem]

We write $\BA$ whose $(i,j)$-block equals $\BA_i\BA_j^{\top}$ into the matrix form:
\[
\BA = \BO\bar{\BA}\bar{\BA}^{\top}\BO^{\top} + \sigma (\BO\bar{\BA}\BW^{\top} + \BW \bar{\BA}^{\top}\BO^{\top} + \sigma \BW\BW^{\top})
\]
where $\BW\in\RR^{nd\times m}$ is a Gaussian random matrix satisfying 
\[
\|\BW\| \leq \sqrt{nd} + \sqrt{m} + \sqrt{2\gamma n\log n}
\]
with probability at least $1 - O(n^{-\gamma+1})$, following from~\cite[Theorem 2.26]{W19} and~\cite[Theorem 7.3.1]{V18}.
This data matrix $\BA$ is decomposed into the signal plus noise form $\BA =  \BO\bar{\BA}\bar{\BA}^{\top}\BO^{\top}  + \sigma\BDelta$:
$\BDelta = \BO\bar{\BA}\BW^{\top} + \BW \bar{\BA}^{\top}\BO^{\top} + \sigma \BW\BW^{\top}$ and
\begin{equation}\label{eq:delta_gopp}
\|\BDelta\|\leq 3\sqrt{n}\|\bar{\BA}\|\|\BW\|
\end{equation}
under the assumption on $\sigma:$
\begin{equation}\label{eq:sigma_gopp}
\sigma \lesssim \frac{\sigma_{\min}(\bar{\BA})}{\kappa^4}\cdot\frac{\sqrt{n}}{\sqrt{d}(\sqrt{nd}+\sqrt{m}+\sqrt{2\gamma n\log n})}.
\end{equation}
Moreover, under~\eqref{eq:sigma_gopp}, Theorem 3.2 in~\cite{L23b} guarantees the SDR~\eqref{def:sdr} is tight, i.e., the global minimizer equals $\widehat{\BO}\widehat{\BO}^{\top}$ for some $\widehat{\BO}\in\Od(d)^{\otimes n}$.
In addition, Theorem 3.3 in~\cite{L23b} implies
\[
\min_{\BQ\in \Od(d)}\|\widehat{\BO} - \BO\BQ\|_F \leq \eps\sqrt{nd}
\]
and
\begin{equation}\label{eq:eps_gopp}
\eps\sqrt{nd} \lesssim \frac{\kappa^2}{\|\bar{\BA}\|}\cdot \sigma \sqrt{d}(\sqrt{nd} + \sqrt{m} + \sqrt{2\gamma n\log n}) 
\lesssim \frac{\sqrt{n}}{\kappa^3}
\end{equation}
under the assumption on $\sigma$ in~\eqref{eq:sigma_gopp}.

Denote $\BLambda = \BDG(\BA\widehat{\BO}\widehat{\BO}^{\top})$ and $\BL = \BLambda - \BA$, and $\BL$ satisfies $\BL\widehat{\BO} = 0.$ At the global minimizer, $\BLambda_{ii} = \sum_{j=1}^n \BA_j\widehat{\BO}_j\widehat{\BO}_i^{\top}$ is symmetric, and its smallest eigenvalue is lower bounded by
\[
n\|\bar{\BA}\|^2 \geq \lambda_{\min}(\BLambda) \geq \frac{1-\eps^2 d}{\kappa^2}\cdot n\|\bar{\BA}\|^2
\]
which follows from the proof of~\cite[Theorem 3.2, pp. 29 and Proposition 5.6]{L23b}.

We have
\begin{align*}
\lambda_{\max}(\BL) & \leq \lambda_{\max}(\BLambda) + \|  \left(\I_{nd} - n^{-1}\widehat{\BO}\widehat{\BO}^{\top}\right) \BA\left(\I_{nd} - n^{-1}\widehat{\BO}\widehat{\BO}^{\top}\right) \|, \\
\lambda_{d+1}(\BL) & \geq \lambda_{\min}(\BLambda) - \|  \left(\I_{nd} - n^{-1}\widehat{\BO}\widehat{\BO}^{\top}\right) \BA\left(\I_{nd} - n^{-1}\widehat{\BO}\widehat{\BO}^{\top}\right) \|, 
\end{align*}
where $\BA = \BO\bar{\BA}\bar{\BA}^{\top}\BO^{\top} +\sigma \BDelta$. Here the proof of Theorem 3.2 in~\cite{L23b} gives
\begin{align*}
& \|  \left(\I_{nd} - n^{-1}\widehat{\BO}\widehat{\BO}^{\top}\right) \BA\left(\I_{nd} - n^{-1}\widehat{\BO}\widehat{\BO}^{\top}\right) \| 
 \leq \|\left(\I_{nd} - n^{-1}\widehat{\BO}\widehat{\BO}^{\top}\right)  \BO\bar{\BA}\|_F^2 + \sigma\|\BDelta\| \\
& = \|\bar{\BA}\|^2 \|\widehat{\BO} - \BO\BQ\|_F^2 + 3\sigma\sqrt{n}\|\bar{\BA}\|\|\BW\| \\
& \leq \eps^2 nd \|\bar{\BA}\|^2 + 3\sigma \sqrt{n} \|\bar{\BA}\| (\sqrt{nd} + \sqrt{m} + \sqrt{2\gamma n\log n}) %\\
%& \leq n\|\bar{\BA}\|^2 \cdot \frac{4\kappa^2 \sigma \sqrt{d} (\sqrt{nd} + \sqrt{m} + \sqrt{2\gamma n\log n})}{\sqrt{n}\|\bar{\BA}\|}
\end{align*}
which follows from~\eqref{eq:ohato},~\eqref{eq:delta_gopp}, and~\eqref{eq:eps_gopp}.

The $(d+1)$-th smallest eigenvalue is lower bounded by
\begin{align*}
\lambda_{d+1}(\BL) & \geq n\|\bar{\BA}\|^2 \left( \frac{1-\eps^2 d}{\kappa^2} - \left(\eps^2d + \frac{3\sigma (\sqrt{nd} + \sqrt{m} + \sqrt{2\gamma n\log n})}{\sqrt{n}\|\bar{\BA}\|}\right)\right) \\
& \geq n\|\bar{\BA}\|^2 \left( \frac{1}{\kappa^2} - \frac{5C_0\kappa^2 \sigma \sqrt{d} (\sqrt{nd} + \sqrt{m} + \sqrt{2\gamma n\log n})}{\sqrt{n}\|\bar{\BA}\|} \right)
\end{align*}
for some constant $C_0 > 0$
where 
\[
\eps^2 nd \lesssim \frac{\kappa^2}{\|\bar{\BA}\|}\cdot \sigma \sqrt{d}(\sqrt{nd} + \sqrt{m} + \sqrt{2\gamma n\log n})  \cdot \sqrt{n} \Longleftrightarrow 
\eps^2 d \lesssim \frac{\kappa^2\sigma \sqrt{d}(\sqrt{nd} + \sqrt{m} + \sqrt{2\gamma n\log n})  }{\sqrt{n}\|\bar{\BA}\|}.
\]

%Here
%\begin{align*}
%\left\| \sum_{j=1}^n \BC_{ij}\widehat{\BO}_j \right\| & \leq \|\BO_i \bar{\BA}\bar{\BA}^{\top}\BO^{\top}\widehat{\BO}\| + \sigma \|\BDelta_i^{\top}\widehat{\BO} \| \leq n\|\bar{\BA}\|^2 + \sigma \|\BDelta_i^{\top}\widehat{\BO}\|
%\end{align*}
%We know that
%\[
%\|\BDelta_i^{\top}\widehat{\BO}\| \leq 3\sqrt{nd}(\sqrt{nd}+\sqrt{m}+\sqrt{2\gamma n\log n})\|\bar{\BA}\|.
%\]
For the largest eigenvalue of $\BL$, it holds that
\begin{align*}
\lambda_{\max}(\BL) & \leq n\|\bar{\BA}\|^2 + \eps^2 nd \|\bar{\BA}\|^2 + 3\sigma \sqrt{n} \|\bar{\BA}\| (\sqrt{nd} + \sqrt{m} + \sqrt{2\gamma n\log n})  \\
& \leq n\|\bar{\BA}\|^2 \left( 1 +  \frac{4C_0\kappa^2 \sigma \sqrt{d} (\sqrt{nd} + \sqrt{m} + \sqrt{2\gamma n\log n})}{\sqrt{n}\|\bar{\BA}\|}  \right).
\end{align*}

As a result, to ensure a benign landscape of~\eqref{def:bm}, it suffices to have
\[
\frac{\lambda_{\max}(\BL)}{\lambda_{d+1}(\BL)} \leq \kappa^2\cdot \frac{\sqrt{n}\|\bar{\BA}\| +C \kappa^4 \sigma \sqrt{d} (\sqrt{nd} + \sqrt{m} + \sqrt{2\gamma n\log n}) }{ \sqrt{n}\|\bar{\BA}\| -  C \kappa^4 \sigma \sqrt{d} (\sqrt{nd} + \sqrt{m} + \sqrt{2\gamma n\log n})} \leq \frac{p+d-2}{2d}
\]
which is equivalent to
\[
\sigma \leq \frac{p + d - 2\kappa^2 d- 2 }{p + d + 2\kappa^2 d - 2} \cdot \frac{\sqrt{n}\|\bar{\BA}\|}{C \kappa^4 \sqrt{d} (\sqrt{nd} + \sqrt{m} + \sqrt{2\gamma n\log n})}.
\]
\end{proof}

%\bibliography{HDKuramoto.bib}
%\bibliographystyle{abbrv}

\end{document}